\documentclass[12pt]{amsart}
\usepackage{amsmath,amsfonts,amssymb} \usepackage{color}
\usepackage[all,cmtip]{xy}
\usepackage{graphicx}

\newcommand{\corr}[1]{\langle {#1} \rangle}

\newcommand{\bg}{{\bf g}}  
\newcommand{\bt}{{\bf t}}

  \newcommand{\cD}{\mathcal{D}} \newcommand{\cE}{\mathcal{E}}

 \newcommand{\bZ}{\mathbb{Z}}
 
 \newcommand{\pd}{\partial}
\newcommand{\Mbar}{\overline{\mathcal M}}

\newcommand{\tz}{\tilde{z}}

\DeclareMathOperator{\Res}{Res} 

\newcommand{\be}{\begin{equation}}
\newcommand{\ee}{\end{equation}}
\newcommand{\bea}{\begin{eqnarray}}
\newcommand{\eea}{\end{eqnarray}}
\newcommand{\ben}{\begin{eqnarray*}}
\newcommand{\een}{\end{eqnarray*}}

\newcommand{\half}{\frac{1}{2}}

\newtheorem{cor}{Corollary}[section]

 \newtheorem{prop}[cor]{Proposition}
 \newtheorem{thm}[cor]{Theorem}
\theoremstyle{remark}

\definecolor{A}{rgb}{.75,1,.75}

\definecolor{yellow}{rgb}{1,1,0}
\definecolor{orange}{rgb}{1,.7,0}
\definecolor{red}{rgb}{1,0,0}
\definecolor{white}{rgb}{1,1,1}

 \makeindex
\begin{document}
\title
{Grothendieck's Dessins d'Enfants in a Web of Dualities. II.}

\author{Jian Zhou}
\address{Department of Mathematical Sciences\\Tsinghua University\\Beijing, 100084, China}
\email{jianzhou@mail.tsinghua.edu.cn}

\begin{abstract}
We show that the spectral curve for Eynard-Orantin topolgical recursions satisfied by
counting Grothendieck's dessins d'enfants are related to Narayana numbers. 
This suggests a connection of dessins to combinatorics of Coxeter groups, 
noncrossing partitions, free probability theory,
and cluster algebras.
\end{abstract}

\maketitle

\section{Introduction}

In the first part \cite{Zhou-Des} of this series of papers,
we have proposed to study dualities of different topological quantum field theories
using   symmetries of the KP hierarchy.
Such theories involve infinitely many formal variables,
so it is natural to study them altogether from a point of view of infinite-dimensional geometry.
One of the most developed approaches to infinite-dimensional geometry is the theory
of integrable hierarchies.
The Kyoto school's approach to integrable hierarchies focuses on their symmetries,
described by the actions infinite-dimensional Lie groups or Lie algebras.
In this approach the KP hierarchy is universal in the sense that
its symmetries are described by $\widehat{GL}(\infty)$,
into which the symmetry groups for other integrable hierarchies
can be embedded.

From this point of view,
different theories have equal status.
Once the partition function of a particular theory is identified
as a tau-function of the KP hierarchy,
it can be transformed into the partition function of any other theory,
which is also a tau-function of the KP hierarchy.
This is very crucial to our understanding of dualities
of different theories,
because the constructions of two theories may be totally unrelated,
but they should share exactly the same properties if they are dual to each other
in our sense.

However, the democracy among different theories does not mean
that some specific theories cannot play more important roles than the others.
The democracy in this setting only means that before some further studies,
each theory is just a candidate for some selection process
after which one may or may not determine some specific theories that
will play a unifying and leading role in the future studies.
One of the goals of this series of papers is to show that
counting of Grothendieck's dessins of d'enfants
plays such a role.
In Part I of this series,
we have focused on some enumeration problems.
We discover that
various enumeration problems
are either special cases of the enumeration of dessins,
or are directly related to it.
We have seen that their generating series
are tau-functions of the KP hierarchy.
These tau-functions have very special properties.
For example,
they satisfy suitable versions of Virasoro constraints,
and  admit some cut-and-join representations.
It is often possible to find explicit formulas for the bases
of the elements  in Sato's Grassmannian corresponding to these tau-functions,
sometimes with the help of representation theory.
With these bases available
one can find the Kac-Schwarz operators.
Furthermore,
one can use the first basis vectors to write down the corresponding quantum spectral curves.
To make contact with mirror symmetry,
we have reformulated the differential equations involved in these examples
of quantum spectral curves as differential equations
of hypergeometric type.

In this Part II,
we will still focus on the examples considered in Part I,
but we will examine them  further from the point of view of emergent geometry of KP hierarchy
developed by the author in earlier works.

Emergence is an notion developed in statistical physics.
Roughly it means the appearance of some structures
when one deals with a system of very large degrees of freedom.
We have borrowed this notion to describe some results in the study of Gromov-Witten type theories.
Such a theory has a finite-dimensional phase space
consisting of the primary operators.
This space is called the small phase space.
Intrinsic to a Gromov-Witten type theory is
its coupling to 2D topological  gravity.
As a result,
each primary operator has an infinite sequence of associated operators
called its gravitational descendants.
The big phase space,
i.e. the space  that includes all the primaries and their descendants,
is then infinite-dimensional.
We refer to the geometric structures that can be more easily seen from the big phase space
than from the small phase space as the emergent geometry.
We emphasize that  such structures also exists over the small phase space,
but it is more natural to think of them as the restrictions of some structures over
the big phase space.

The emergent approach to the study of a Gromov-Witten type theory is
then to look for some structures over the big phase space
and consider their restrictions to the small phase space to
get back to a finite-dimensional setting.
This is in contrast to the reduction/reconstruction approach
which starts with some structures on the small phase space
and tries to reconstruct the whole theory based on some axioms
that characterize such theories.

In earlier papers,
we have shown that in some examples of Gromov-Witten type theories,
the following structures emerge:
a plane algebraic curve called the spectral curve,
a family of deformations of this curve defined over the big phase space called its special
deformation,
and a conformal field theory defined over the spectral curve.
More recently,
it has been realized that Eynard-Orantin topological recursions
can also be understood as emergent geometry.
In summary,
emergent geometry is a synthesis of ideas from mirror symmetry, Witten Conjecture/Kontsevich Theorem,
and Eynard-Orantin topological recursions.

It is possible to study the emergent geometry of any Gromov-Witten type theory whose partition
function is a tau-function of the KP hierarchy \cite{Zhou-KP}.
It is particularly simple when the partition function also satisfies the Virasoro constraints.
The original motivation was to understand the mirror symmetry of a point,
more precisely,
why the $n$-point functions associated with the Witten-Kontsevich tau-function \cite{Wit, Kon}
satisfied the Eynard-Orantin topological recursion
on the Airy curve $y^2 = x$,
a result proved many times by different approach in e.g. \cite{EO, BCSW, Zhou-LMP}.
The proof by the author in \cite{Zhou-LMP} indicates that the Eynard-Orantin topological recursion
in this case is equivalent to the Virasoro constraints derived by Dijkgraaf-Verlinde-Verlinde \cite{DVV},
and spectral curve and the Bergman  kernel are determined by making sense of
genus zero one-point function and the genus zero two-point function respectively.
The emergent geometry in this sense of 
the modified partition functions of Hermitian one-matrix models with even coupling constants
has been studied in \cite{Zhou-Even}.
The same has been done for partition functions of Hermitian one-matrix models
in \cite{Zhou-Matrix}.
More recently,
in Part I of this series \cite{Zhou-Des} we have discovered that the enumeration of Grothendieck's dessins d'enfants
lies in the center of a web of dualities of various enumerative problems,
including the above three cases.
One of the goals of this paper is to present the emergent geometry
of Grothendick's dessins d'enfants
in the same fashion as in the above three cases.

It may seem redundant to consider the Eynard-Orantin topological recursions in these cases
because it is concerned with the computations of genus $g$, $n$-point functions recursively,
but on the other hand we have already obtained a formula for all
$n$-point functions, summed over all genera,
in \cite{Zhou-Des} based on the general results for tau-functions of the KP hierarchy developed in \cite{Zhou-KP}.
Our reason is that these different  approaches yield different  formulas that encode different information.
The formulas in Part I  encode information related to problems in representation theory,
while the formulas in this Part encode the information related to  combinatorial problems.
Our second goal of this paper is to present some unexpected connection between counting dessins
and Narayana numbers.
More precisely,
the genus zero one-point function of Grothendieck's dessins
is the generating function of the Narayana numbers.
Because the latter is related to cluster algebras, root systems,
Coexter groups, associahedra, etc, see e.g. \cite{Fom-Rea},
this suggests a connections from dessins to these mathematical objects.
In particular,
because Grothendieck's dessins correspond to Type A root systems,
our results suggest the possibility of dessin theory of Type BCD.
Note in our previous works on emergent geometry,
we encounter only Catalan numbers.
Because Narayana numbers refine Catalan numbers,
we get another evidence that counting dessins is more fundamental than counting fat graphs
(aka clean dessins).

Derivation of topological recursion from Virasoro constraints for counting Grothendieck' dessins
has already been given by Kazarian and Zograf \cite{Kaz-Zog}.
But their version is not exactly the version of Eynard and Orantin \cite{EO}.
They have remarked that one can derive the latter version from their version. 
We will present a direct derivation of the Eynard-Orantin version
in the same fashions as in \cite{Zhou-Even, Zhou-Matrix}.
Our motivation is, 
first of all,
to interpret the Eynard-Orantin recursions in these cases 
as emergent geometry. 
Secondly,
as a consequence,
we establish a connection between counting dessins and 
intersection theory on $\Mbar_{g,n}$.

We arrange the rest of the paper in the following way.
In Section  \ref{sec:Catalan} we recall some earlier results on emergent geometry of some enumeration 
problems and the appearance of Catalan numbers in their genus zero one-point functions.
We show how to compute the $n$-point functions in arbitrary genera for dessin tau-function
using the dessin Virasoro constraints in  Section 3.
In particular,
the genus zero one-point function in this case is shown to be the generating series of
the Narayana numbers.
The Eynard-Orantin topological recursions are shown to be equivalent to 
the dessin Virasoro constraints in Section \ref{sec:Emergent}.
In Section \ref{sec:Intersection}
wee present some computations necessary for relating dessins to intersection
numbers numbers using Eynard's results \cite{Eynard-Int2}.
In he final Section \ref{sec:Conclusion}
we present some concluding remarks.

\section{Earlier Results on Emergent Geometry and Catalan Numbers}

\label{sec:Catalan}

In this Section we recall some results on emergent geometry in the following three cases:
Witten-Kontsevich tau-function,
partition functions of Hermitian one-matrix models,
and modified partition functions of Hermitian one-matrix models with even coupling constants.
We will focus on the genus zero one-point functions
and two-point functions.
We note that the genus zero
one-point functions in these three cases
are all related to the Catalan numbers.

\subsection{Emergent geometry of Witten-Kontsevich tau-function}

The first example of emergent geometry of Gromov-Witten type theories
appeared in \cite{Zhou-Mirror},
based on \cite{ADKMV}.
In genus zero this means the construction of the special deformation of the Airy curve
defined using the genus zero one-point function associated to the Witten-Kontsevich tau-function \cite{Zhou-Mirror}.
In higher genera,
this means an equivalent reformulation of the Virasoro constraints
in terms of quantization of a field on the Airy curve \cite{Zhou-Mirror},
or Eynard-Orantin recursions on the Airy curve \cite{Zhou-LMP}.

The relevant correlators are
\be
\corr{\tau_{a_1} \cdots \tau_{a_n}}_g = \int_{\Mbar_{g,n}} \psi_1^{a_1} \cdots \psi_n^{a_n}.
\ee
The genus $g$ free energy is defined by
\be
F_g = \sum_n \frac{t_{a_1} \cdots \tau_{a_n}}{n!} \corr{\tau_{a_1} \cdots \tau_{a_n}}_g,
\ee
where the summation is taken over $n$ such that $2g-2+n > 0$.
The Virasoro constraints in terms of the free energy are given by:
\bea
&& \frac{\pd F^{WK}}{\pd  g_0}
= \sum_{k=1}^\infty (2k+1) g_k\frac{\pd F^{WK}}{\pd g_{k-1}} + \frac{g_0^2}{2\lambda^2},  \label{eqn:Virasoro-1} \\
&& \frac{\pd F^{WK}}{\pd g_1}
= \sum_{k=0}^\infty (2k+1) g_k\frac{\pd F^{WK}}{\pd g_{k}} + \frac{1}{8},  \label{eqn:Virasoro0}\\
&& \frac{\pd F^{WK}}{\pd u_n}
= \sum_{k=0}^\infty (2k+1) g_k \frac{\pd F^{WK}}{\pd g_{k+n-1}} \label{eqn:VirasoroN} \\
&& \;\;\;\;\;\; + \frac{\lambda^2}{2} \sum_{k=0}^{n-2} \biggl(\frac{\pd^2 F^{WK}}{\pd g_k\pd g_{n-k-2}}
+ \frac{\pd F^{WK}}{\pd g_k} \frac{\pd F^{WK}}{\pd g_{n-k-2}} \biggr), \;\;\; n \geq 2. \nonumber
\eea
Usually $\{t_n\}_{n\geq 0}$ are used to denote the coupling constants.
Here we have made the following change of coordinates:
\be
t_k = (2k+1)!! g_k.
\ee
Correspondingly,
the operator $\tau_n$ is changed to $\sigma_n = (2n+1)!!\tau_n$.

Because $\Mbar_{0,1}$ and $\Mbar_{0,2}$ are not defined,
the genus zero one-point function and  two-point function  need special care.
One approach,
taken in \cite{Zhou-LMP} and elaborated further in \cite{Zhou-KP},
is to introduce ghost operators $\tau_{-n}$ and ghost variables $t_{-n}$.
Another approach,
taken in \cite{Zhou-Mirror},
is to consider the theory in a different background,
i.e., with nonzero $g_0$.
To compute the genus zero one-point function in this background,
one then needs to compute:
\ben
\corr{\sigma_n e^{g_0 \sigma_0}}_0
& = & \frac{g_0^{n+2}}{(n+2)!}\corr{\sigma_0^{n+2}\sigma_n}_0
=   \frac{(2n+1)!! }{(n+2)!} g_0^{n+2} \\
& = & \frac{1}{n+2} \binom{2n+2}{n+1} \cdot \frac{g_0^{n+2}}{2^{n+1}},
\een
where
\be
C_n = \frac{1}{n+1} \binom{2n}{n}
\ee
are the {\em Catalan numbers}.
We note the above result can be reformulated as follows:
\be \label{eqn:Der-F0}
z (1- \frac{2g_0}{z^2})^{1/2} = z - \frac{g_0}{z}
- \sum_{n =0}^\infty \frac{\pd F_0}{\pd g_n}(g_0, 0, \dots) \cdot z^{-(2n+3)}.
\ee
So we define the genus zero one-point function
associated to Witten-Kontsevich tau-function by:
\be
G^{WK}_{0,1}(z; g_0) = \sum_{n =0}^\infty \frac{\pd F_0}{\pd g_n}(g_0, 0, \dots) \cdot z^{-(2n+3)}
= z - \frac{g_0}{z} - z (1- \frac{2g_0}{z^2})^{1/2}.
\ee
In \cite{Zhou-Mirror},
motivated by \eqref{eqn:Der-F0},
we consider the following series:
\be \label{eqn:X-in-F}
x = z - \sum_{n \geq 0} (2n+1) g_n z^{2n-1} - \sum_{n \geq 0} \frac{\pd F_0}{\pd g_n}(\bg) \cdot z^{-2n-3}.
\ee
By Virasoro constraints one has
\be \label{eqn:x^2}
\begin{split}
x^2 = &
2 y \biggl(1- \sum_{n \geq 1} (2 n+1) g_n (2y)^{n-1}\biggr)^2 \\
- & 2 g_0 \biggl(1-  \sum_{n \geq 1} (2n+1) g_n (2y)^{n-1} \biggr) \\
+ & 2 \sum_{n \geq 0} \sum_{k \geq n+2} (2k+1) g_k \cdot \frac{\pd F_0}{\pd g_n} \cdot (2y)^{k-n-2} ,
\end{split}
\ee
where $y =\half z^2$.
In fact,
the following equation is equivalent to Virasoro constraints for $F_0^{WK}$:
\be
(x^2)_- = 0.
\ee
Here for a formal series $\sum_{n \in \bZ} a_n z^n$,
\be
(\sum_{n \in \bZ} a_n z^n)_+ = \sum_{n \geq 0} a_n z^n, \;\;\;\;
(\sum_{n \in \bZ} a_n z^n)_- = \sum_{n < 0} a_n z^n.
\ee

The  genus zero two-point correlators are given by:
\ben
\corr{\sigma_k\sigma_le^{g_0\sigma_0}}_0
& = & \frac{g_0^{k+l+1}}{(k+l+1)!}
\corr{\sigma_k\sigma_l\sigma_0^{k+l+1}}_0
= \frac{(2k+1)!!(2l+1)!!}{k!l!(k+l+1)}g_0^{k+l+1}.
\een
The genus zero two-point function in this case is defined by:
\ben
G_{0,2}(z_1,z_2; g_0)
& = & \sum_{n_1,n_2 =0}^\infty
\frac{\pd^2 F_0}{\pd g_{n_1}\pd g_{n_2}}(g_0, 0, \dots)
\cdot z_1^{-(2n_1+3)} z_2^{-(2n_2+3)} \\
& = & \sum_{k,l =0}^\infty
\frac{(2k+1)!!(2l+1)!!}{k!l!(k+l+1)}g_0^{k+l+1}
\cdot z_1^{-(2k+3)} z_2^{-(2l+3)} .
\een
To take the summation over $k$ and $l$,
we first take derivative in $g_0$ to get:
\ben
\frac{\pd}{\pd g_0} G_{0,2}(z_1,z_2; g_0)
& = & \sum_{k,l =0}^\infty
\frac{(2k+1)!!(2l+1)!!}{k!l!}g_0^{k+l}
\cdot z_1^{-(2k+3)} z_2^{-(2l+3)} \\
& = & \sum_{k=0}^\infty
\frac{(2k+1)!!}{k!}g_0^{k} z_1^{-(2k+3)}   \cdot
\sum_{l =0}^\infty
\frac{(2l+1)!!}{l!}g_0^{l}z_2^{-(2l+3)} \\
& = & \frac{1}{(z_1^2-2g_0)^{3/2}(z_2^2-2g_0)^{3/2}},
\een
and so after integration we get:
\be
\begin{split}
G_{0,2}(z_1,z_2; g_0)
& = \frac{z_1^2+z_2^2-4g_0}{(z_1^2-z_2^2)^2(z_1^2-2g_0)^{1/2}(z_2^2-2g_0)^{1/2}} \\
& - \frac{z_1^2+z_2^2}{(z_1^2-z_2^2)^2z_1z_2}.
\end{split}
\ee

\subsection{Emergent geometry of  Hermitian one-matrix models}

We refer to \cite{Zhou-HMM}, \cite{Zhou-Fat-Thin1}, \cite{Zhou-Fat-Thin2} for notations and proofs of the results in this Subsection.
Roughly speaking,
with the introduction of the t' Hooft coupling constant $t= Ng_s$,
one gets the free energy functions $F_g(\bt)$ from the fat genus expansions
of Hermitian one-matrix models.
The genus zero fat Virasoro constraints can be reformulated as follows:
The series
\ben
&& y = \frac{1}{2} \sum_{n=0}^\infty
\frac{t_n -\delta_{n,1}}{n!} x^n
+ \frac{t}{x} +\sum_{n = 1}^\infty
\frac{n!}{x^{n+1}} \frac{\pd F_{0}(\bt)}{\pd t_{n-1}}
\een
is a solution to the equation
\be
(y^2)_- = 0.
\ee
When all $t_n$ are taken to be $0$,
\ben
y & = & - \frac{x}{2} + \frac{t}{x} + \frac{t^2}{x^3}+\frac{2t^3}{x^5}+\frac{5t^4}{x^7}+\frac{14t^5}{x^9}
+ \cdots \\
& = & - \frac{x}{2} + \sum_{n=1}^\infty \frac{1}{n} \binom{2n-2}{n-1} \frac{t^n}{x^{2n-1}},
\een
so we again encounter Catalan numbers. Note
\be \label{eqn:Fat-Spec1}
y = \frac{-\sqrt{x^2-4t}}{2},
\ee
and so the spectral curve in terms of the field $y=y(z)$ is given by the algebraic curve:
\be \label{eqn:Fat-Spec}
x^2 - 4y^2 = 4t.
\ee

Define the genus zero one-point function in this case by:
\be
G_{0,1}(x) = \frac{t}{x} + \sum_{m\geq 1}   x^{-m-1} \cdot m\frac{\pd F_g}{\pd g_{m}}\biggl|_{g_i = 0},
\ee
then the above result shows that
\be
G_{0,1}(x) = \frac{1}{2} (x- \sqrt{x^2-4t}).
\ee
In terms of the correlators,
\be
\corr{p_m}_0^c(t)
= \begin{cases}
\frac{1}{n+1}\binom{2n}{n} t^{n+1} & \text{if $m =2n$}, \\
0 & \text{if $m = 2n+1$}.
\end{cases}
\ee
The genus zero two-point function is computed in \cite{Zhou-Matrix} to be:
\be
 G_{0,2}(x_1,x_2)
= \frac{x_1x_2-4t}{2(x_1-x_2)^2\sqrt{(x_1^2-4t)(x_2^2-4t)}}
- \frac{1}{2(x_1-x_2)^2}.
\ee
It has the following expansion:
\be \label{eqn:G02-2}
\begin{split}
 G_{0,2}(x_1,x_2)
= &  \sum_{m,n=0}^\infty  \frac{t^{m+n+1}}{m+n+1} \frac{(2m+1)!}{(m!)^2}
\cdot   \frac{(2n+1)!}{(n!)^2} \frac{1}{x_1^{2m+2}x_2^{2n+2}} \\
+ & \sum_{m,n=0}^\infty  \frac{4 t^{m+n+2}}{m+n+2} \frac{(2m+1)!}{(m!)^2}
\cdot   \frac{(2n+1)!}{(n!)^2} \frac{1}{x_1^{2m+3}x_2^{2n+3}}.
\end{split}
\ee
In terms of correlators we have
\be
\corr{p_{2m+1}p_{2n+1} }_0^c(t) =  \frac{(2m+1)!}{(m!)^2} \cdot \frac{(2n+1)!}{(n!)^2} \cdot \frac{t^{m+n+1}}{m+n+1},
\ee
\be
\corr{p_{2m+2}p_{2n+2} }_0^c(t) =  \frac{(2m+1)!}{(m!)^2} \cdot \frac{(2n+1)!}{(n!)^2} \cdot \frac{4t^{m+n+2}}{m+n+2},
\ee
and other genus zero two-point correlators vanish.

\subsection{Emergent geometry of  modified partition function}

Now we recall the case of modified partition function of Hermitian one-matrix model with
even couplings.
For details, see \cite{Zhou-Even}.
The Virasoro constraints in genus zero are
the following sequence of differential equations:
\bea
&& \half \frac{\pd F_0}{\pd s_2} =  \sum_{k\geq 1}
k s_{2k} \frac{\pd F_0}{\pd s_{2k}} + \frac{t^2}{4},  \label{eqn:L0-0} \\
&& \half \frac{\pd F_0}{\pd s_{2n+2}} =
\sum^{n-1}_{k=1} \frac{\pd F_0}{\pd s_{2k}}
\frac{\pd F_0}{\pd s_{2n-2k}}
+ t \frac{\pd F_0}{\pd s_{2n}} +  \sum_{k\geq 1}
ks_{2k} \frac{\pd F_0}{\pd s_{2k+2n}}, \quad n \geq 1.
\eea
In this case,
we consider:
\be \label{eqn:Special}
y:=\half \sum_{k=1}^\infty k(s_{2k}-\half \delta_{k,1}) x^{k-1}
+ \frac{t}{2x}  + \sum_{k=1}^\infty \frac{1}{x^{k+1}}
\frac{\pd F_0}{\pd s_{2k}}.
\ee
Then  by the Virasoro constraints above:
\be \label{eqn:y2minus}
(y^2)_- = \biggl( \half t(s_{2}-\half)
+ \sum_{l \geq 1}(l+1) s_{2l+2}
\frac{\pd F_0}{\pd s_{2l}} \biggr) x^{-1},
\ee
and so
\ben
y^2 & = & \frac{1}{4}
(\sum_{k=1}^\infty k(s_{2k}-\half \delta_{k,1}) x^{k-1})^2
 \\
& + & \frac{t}{2} \sum_{k=1}^\infty k(s_{2k}-\half \delta_{k,1}) x^{k-2}
+ \sum_{l \geq 1} \sum_{k \geq l+1} k s_{2k}
\frac{\pd F_0}{\pd s_{2l}} x^{k-l-2}.
\een
It follows that when all $s_{2k} = 0$,
we get the spectral curve:
\be \label{eqn:Spectral}
y^2=\frac{1}{16}-\frac{t}{4x}.
\ee
The one-point function in genus zero is:
\be \label{eqn:G01}
G_{0,1}(x)
= \frac{1}{4}
\biggl(1-\frac{2t}{x}
- \sqrt{1-\frac{4t}{x}}\biggr).
\ee
It has the following expansion:
\be
G_{0,1}(x) = \frac{1}{4}
\sum_{n \geq 2} \frac{(2n-3)!!}{n!}
\cdot (2t)^{n}  x^{-n}
= \frac{1}{2} \sum_{n \geq 2} \frac{(2n-2)!}{(n-1)!n!} \frac{t^n}{x^n}.
\ee
The two-point function in genus zero is:
\be \label{eqn:G02}
 G_{0,2}(x_1,x_2)
=  \frac{1-\frac{2t}{x_1}-\frac{2t}{x_2}}
{2(x_1-x_2)^2\sqrt{(1-\frac{4t}{x_1})(1-\frac{4t}{x_2})}}
- \frac{1}{2(x_1-x_2)^2}.
\ee
It has the following expansion:
\be \label{eqn:G02-2}
 G_{0,2}(x_1,x_2)
= 2 \sum_{l=2}^\infty  \frac{t^l}{l }
\sum_{m+n=l-2}  \frac{(2m+1)!}{(m!)^2}
\cdot   \frac{(2n+1)!}{(n!)^2}
\frac{1}{x_1^{m+2}x_2^{n+2}},
\ee
In terms of correlators we have
\be
\corr{p_{2m}p_{2n} }_0^c = \frac{1}{2} \cdot \frac{(2m)!}{(m-1)!m!} \cdot \frac{(2n)!}{(n-1)!n!} \cdot \frac{t^{m+n}}{m+n}.
\ee

\section{$N$-Point Functions of Dessin Tau-Function by Virasoro Constraints}

\label{sec:N-Point}

In this Section we show how to compute  the $n$-point functions associated to the dessin tau-function
 by the dessin Virasoro constraints.

\subsection{Dessin   $n$-point functions}

Recall the genus $g$, $n$-point functions associated to the dessin free energy  $F_{Dessins}$ 
\cite{Zhou-Des}  is defined by:
\be
G^{(n)}(\xi_1, \dots, \xi_n)
= \sum_{m_1, \dots, m_n\geq 1}
\prod_{j=1}^n \frac{m_j}{\xi_j^{-m_j-1}} \frac{\pd}{\pd p_{m_j}}  F_{Dessins} \biggl|_{p_m =0, m \geq 1}
\cdot
\ee
An explicit formula for these functions based on the theory of KP hierarchy
have been proved in \cite{Zhou-Des}.
In particular,
\ben
G^{(1)}(\xi) & = &
\sum_{n \geq 1} \frac{s^nuv}{n} \sum_{i+j=n-1}
 \frac{(-1)^j}{i!j!} \prod_{a=1}^{i} (u+a)(v+a)  \\
 && \cdot \prod_{b=1}^j (u-b)(v-b) \cdot \xi^{-n-1}.
\een
It is actually possible to take the summation $\sum_{i+j=n-1}$
and simplify such a formula.
In this Section we will show how to recursively compute the $n$-point functions
using the dessin Virasoro constraints. 

The  $n$-point correlators are defined by:
\be
\corr{p_{a_1} \cdots p_{a_n}}_g^c
: = a_1\cdots a_n
\frac{\pd^n F_{g,dessin}}{\pd p_{a_1} \cdots \pd p_{a_n}} \biggl|_{p_k =0, k \geq 0}.
\ee
Define the genus $g$, $n$-point function by
\be
G_{g,n}(x_1, \dots, x_n)
= \sum_{a_1, \dots, a_n \geq 0}
\corr{p_{a_1} \cdots p_{a_n}}_g^c
\frac{1}{x_1^{a_1+1} \cdots x_n^{a_n+1}}.
\ee 

\subsection{Virasoro constraints for dessin tau-function}
Recall the Virasoro constraints for dessins are given by \cite{Kaz-Zog}:
For $n \geq 0$,
\be
\begin{split}
L_n  = & -\frac{n + 1}{s} \frac{\pd}{\pd p_{n+1}}
+(u +v)n \frac{\pd}{\pd p_n}
+ \sum_{j=1}^\infty p_j(n + j) \frac{\pd}{\pd p_{n+ j}} \\
+ & \sum_{i + j=n} i j \frac{\pd^2}{\pd p_i\pd p_j}
+\delta_{n,0}uv.
\end{split}
\ee
To keep track of the genus,
introduce the genus tracking parameter.
This corresponds to the following change of variables:
\begin{align}
s\mapsto \lambda s, u \mapsto \lambda^{-1} u, v \mapsto \lambda^{-1} v, p_n \mapsto \lambda^{-1} p_n.
\end{align}
The dessin Virasoro operators then take the following form: 
\be
\begin{split}
L_n  = & -\frac{n + 1}{s} \frac{\pd}{\pd p_{n+1}}
+(u +v)n \frac{\pd}{\pd p_n}
+ \sum_{j=1}^\infty p_j(n + j) \frac{\pd}{\pd p_{n+ j}} \\
+ & \lambda^{2}\sum_{i + j=n} i j \frac{\pd^2}{\pd p_i\pd p_j}
+\delta_{n,0}uv \lambda^{-2}.
\end{split}
\ee

\subsection{Virasoro constraints in terms of correlators}

When expressed in terms of correlators,
the Virasoro constraints can be expressed as follows.
First,
\begin{align}
\corr{p_1}_0 & = suv,
\end{align}
and for $m \geq 0$,
\be \label{eqn:P2m}
\begin{split}
& \frac{m+1}{s} \corr{p_{m+1} \cdot p_{a_1} \cdots p_{a_n}}_{g} \\
= & \sum_{j=1}^n (a_j+m) \cdot \corr{p_{a_1} \cdots
p_{a_j+m} \cdots p_{a_n} }_{g}   \\
+ & (u+v) m\corr{p_{m} \cdot p_{a_1} \cdots p_{a_n}} _{g} \\
+ & \sum_{k=1}^{m-1} k(m-k)\corr{p_{k}p_{m-k} \cdot
p_{a_1} \cdots p_{a_n} }_{g-1}  \\
 + &\sum_{k=1}^{m-1} \sum_{\substack{g_1+g_2=g\\I_1 \coprod I_2 = [n]}}
k\corr{p_{k} \cdot \prod_{i\in I_1} p_{a_i}}_{g_1}
\cdot (m-k) \corr{p_{m-k} \cdot \prod_{i\in I_2} p_{2a_i}}_{g_2},
\end{split}
\ee
where $[n]=\{1, \dots, n\}$.

\subsection{Genus zero one-point function by Virasoro constraints and Narayana polynomials}

From
\ben
\corr{p_1}_0 & = & suv, \\
m\corr{p_{m}}_0^c & = & s \sum_{k=1}^{m-2}
k\corr{p_{k}}_0 \cdot (m-1-k) \corr{p_{m-1-k}}_0^c \\
& + & s(u+v) \cdot (m-1)\corr{p_{m-1}}_{0}, \quad m \geq 2,
\een
we  get:
\ben
G_{0,1}(x) & = &   \frac{suv}{x^2}
+ s\sum_{n \geq 2} \frac{1}{x^{n+1}} \biggl( \sum_{k=1}^{n-2}
k\corr{p_{k}}_0^c \cdot (n-1-k)\corr{p_{n-1-k}}_0 \\
& + & (u+v) \cdot (n-1) \corr{p_{n-1}}_{0} \biggr) \\
& = & \frac{suv}{x^2} + \frac{s(u+v)}{x} G_{0,1}(x)
+ s G_{0,1}(x)^2.
\een
After solving for $G_{0,1}(x)$ we then get:
\be \label{eqn:G01}
\begin{split}
G_{0,1}(x) = & \frac{1}{2s}
\biggl(1-\frac{s(u+v)}{x}
- \sqrt{\biggl(1-\frac{s(u+v)}{x}\biggr)^2
-\frac{4s^2uv}{x^2}}\biggr) \\
= & \frac{1}{2s}
\biggl(1-\frac{s(u+v)}{x}
- \sqrt{1-\frac{2s(u+v)}{x}+\frac{s^2(u-v)^2}{x^2}}\biggr).
\end{split}
\ee
Now recall the following identity:
\be
\begin{split}
& \frac{1}{2} \biggl(1-(u+v)z  - \sqrt{1-2(u+v)z+(u-v)^2z^2}\biggr) \\
= & \sum_{n=1}^\infty \frac{z^{n+1}}{n} \sum_{k=1}^n \binom{n}{k} \binom{n}{k-1} u^{n+1-k} v^k.
\end{split}
\ee
It gives the generating series of the Narayana numbers
\be
N_{n,k} = \frac{1}{k} \binom{n-1}{k-1}\binom{n}{k-1}
= \frac{1}{n} \binom{n}{k}\binom{n}{k-1}.
\ee
They are A001263 on \cite{OEIS}.
The polynomials
\be
N_n(q):=\sum_{k=1}^n N_{n,k} q^k
\ee
is called the $n$-th Narayana polynomial.
The following are the first few terms of $G_{0,1}(x)$:
\ben
G_{0,1}(x)
& = & \frac{suv}{x^2} + \frac{s^2uv(u+v)}{x^3}
+ \frac{s^3uv(u^2+3uv+v^2)}{x^4}\\
& + & \frac{s^4 uv(u^3+6u^2v+6uv^2+v^3)}{x^5} \\
& + & \frac{s^5uv(u^4+10u^3v+20u^2v^2+10uv^3+v^4)}{x^6}
+\cdots.
\een
In terms of the correlators,
we have shown that:
\be
\corr{p_{n}}_0 = s^n uv\sum_{k=1}^n \frac{\binom{n}{k}}{n}
\frac{\binom{n}{k-1}}{n} u^{n-k} v^{k-1}
= \frac{1}{n} s^n u^{n+1}N_n(\frac{v}{u}).
\ee

It is well-known that the Narayana numbers provide a version of $q$-analogues of the Catalan numbers.
By taking $q=1$ in the Narayana polynomials,
one gets the Catalan numbers:
\be
\sum_{k=1}^n \frac{1}{n} \binom{n}{k}\binom{n}{k-1}
= \frac{1}{n+1} \binom{2n}{n}.
\ee
See e.g. \cite{Fom-Rea} and \cite{Arm} for references to related mathematical objects.

\subsection{Computation of genus zero two-point function by Virasoro constraints}

By   equation \eqref{eqn:P2m},
\ben
\corr{p_1 \cdot p_{n}  }_{0}  = sn \cdot \corr{ p_{n} }_{0} ,
\een
and so
\ben
&& \sum_{n=1}^\infty n\corr{p_1 \cdot  p_{n} }_{0} \frac{1}{x_2^{n+1}}
=  \sum_{n \geq 1} sn \cdot \corr{ p_{n}   }_{0}   \frac{n}{x_2^{n+1}} \\
& = & - \frac{d}{dx_2} \biggl(sx_2 \sum_{n \geq 1} \corr{ p_{n}}_{0}  \frac{n}{x_2^{n+1}} \biggr) \\
& = & -\frac{d}{dx_2} \biggl(sx_2 \cdot \frac{1}{2s}
\biggl(1-\frac{s(u+v)}{x_2}
- \sqrt{1-\frac{2s(u+v)}{x_2}+\frac{s^2(u-v)^2}{x_2^2}}\biggr) \biggr) \\
& = & \frac{1-\frac{s(u+v)}{x_2}}{2\sqrt{1-\frac{2s(u+v)}{x_2}+\frac{s^2(u-v)^2}{x_2^2}}} -\frac{1}{2}.
\een
Now recall the following combinatorial identity:
\be
\begin{split}
& \frac{1}{2} \frac{1-(u+v)z}{\sqrt{1-2(u+v)z+(u-v)^2z^2}}
- \frac{1}{2} \\
= & \sum_{n=1}^\infty z^{n+1} \sum_{k=1}^n \binom{n}{k} \binom{n}{k-1} u^{n+1-k} v^k \\
= &  \sum_{n=1}^\infty nz^{n+1}u^{n+1}N_n(\frac{v}{u}).
\end{split}
\ee
It gives the generating series of A132812  on \cite{OEIS}.
The first few terms are:
\ben
\sum_{n=1}^\infty n \corr{p_1 \cdot  p_{n} }_{0} \frac{1}{x_2^{n+1}}
&=&\frac{uvs^2}{x_2^2}+\frac{2uv(u+v)s^3}{x_2^3}
+\frac{3 uv(u^2+3uv+v^2)s^4}{x_2^4} \\
&+ & \frac{4uv(u^3+6u^2v+6uv^2+v^3)s^5}{x_2^5}
+\cdots.
\een

Next we consider the following equation:
\be
2\corr{p_2 \cdot p_{n} }_{0}
 = s (n+1) \cdot \corr{ p_{n+1} }_{0}
 + s (u+v) \cdot \corr{p_1 \cdot p_{n}}_{0} .
\ee
By taking generating series,
one gets:
\ben
&& \sum_{n \geq 1} 2 n \corr{p_2 \cdot p_{n} }_{0} \frac{1}{x_2^{n+1}} \\
& = & \sum_{n \geq 1} \frac{n}{x_2^{n+1}}
\biggl( s (n+1) \cdot \corr{ p_{n+1} }_{0}
 + s(u+v) \cdot \corr{p_1 \cdot p_{n}} _{0} \biggr) \\
& = & s \sum_{n \geq 2} \frac{n(n-1)}{x_2^n} \corr{ p_{n} }_{0}
+ s(u+v) \sum_{n \geq 1} \frac{n}{x_2^{n+1}} \cdot \corr{p_1 \cdot p_{n}}^c_{0}  \\
& = & - s\frac{d}{dx_2} \biggl(x_2^2 \sum_{n \geq 2} \frac{n}{x_2^{n+1}} \corr{ p_{n} }_{0}^c  \biggr)
+ s(u+v) \sum_{n \geq 1} \corr{p_1 \cdot p_{ n}}^c_{0}  \frac{1}{x_2^{n+1}} \\
& = & - s \frac{d}{dx_2} \biggl[ x_2^2 \cdot
\frac{1}{2s} \biggl(1-\frac{s(u+v)}{x_2}
- \sqrt{1-\frac{2s(u+v)}{x_2}+\frac{s^2(u-v)^2}{x_2^2}}\biggr) \biggr]  \\
& + & s(u+v) \cdot
\biggl( \frac{1-\frac{s(u+v)}{x_2}}
{2\sqrt{1-\frac{2s(u+v)}{x_2}+\frac{s^2(u-v)^2}{x_2^2}}} -\frac{1}{2}
\biggr) \\
& = & x_2 \biggl( \frac{1-\frac{s(u+v)}{x_2}
}{\sqrt{1-\frac{2s(u+v)}{x_2}+\frac{s^2(u-v)^2}{x_2^2}}} - 1 \biggr)
- \frac{\frac{2s^2uv}{x_2}}{\sqrt{1-\frac{2s(u+v)}{x_2}+\frac{s^2(u-v)^2}{x_2^2}}}  .
\een
Recall the following combinatorial identity:
\be
\frac{1}{\sqrt{1-2(u+v)z+(u-v)^2z^2}} \\
= \sum_{n=0}^\infty z^{n}
\sum_{k=0}^n \binom{n}{k}^2  u^{n-k} v^k.
\ee
Therefore,
\ben
&& \sum_{n \geq 1} 2n\corr{p_2 \cdot p_{n} }^c_{0}
\frac{1}{x_2^{n+1}}  \\
& = & 2x_2\sum_{n=1}^\infty \frac{s^{n+1}}{x_2^{n+1}}
\sum_{k=1}^n \binom{n}{k} \binom{n}{k-1} u^{n+1-k} v^k \\
& - & \frac{2s^2uv}{x_2}
\sum_{n=0}^\infty \frac{s^n}{x_2^n}
\sum_{k=0}^n \binom{n}{k}^2  u^{n-k} v^k \\
& = & 2\sum_{n=2}^\infty \frac{s^{n+1}}{x_2^{n}}
\sum_{k=1}^n
\biggl[ \binom{n}{k} \binom{n}{k-1}
- \binom{n-1}{k-1}^2
\biggr]u^{n+1-k} v^k,
\een
with the first few terms given by:
\ben
&& \frac{2s^3uv(u+v)}{x_2^2}+\frac{2s^4uv(2u^2+5uv+2v^2)}{x^3}
+ \cdots.
\een

\begin{prop}
The following formula for genus zero dessin two-point function holds:
\be \label{eqn:G02-Des}
\begin{split}
& G_{0,2}(x_1,x_2) \\
=&  \frac{1-\frac{s(u+v)}{x_1}-\frac{s(u+v)}{x_2} + \frac{s^2(u-v)^2}{x_1x_2}}
{2(x_1-x_2)^2\sqrt{(1-\frac{2s(u+v)}{x_1}+ \frac{s^2(u-v)^2}{x_1^2})(1-\frac{2s(u+v)}{x_2}+ \frac{s^2(u-v)^2}{x_2^2})}} \\
-& \frac{1}{2(x_1-x_2)^2}.
\end{split}
\ee
\end{prop}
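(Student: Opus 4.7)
The plan is to extend the preceding computations of $\sum_{n \geq 1} n \corr{p_1 p_n}_0/x_2^{n+1}$ and $\sum_{n \geq 1} 2n \corr{p_2 p_n}_0/x_2^{n+1}$ uniformly in $m$, by introducing $x_1$ as a second generating variable. I would specialize equation \eqref{eqn:P2m} to genus zero with a single auxiliary insertion $p_a$, namely
$$\frac{m+1}{s}\corr{p_{m+1}p_a}_0 = (a+m)\corr{p_{a+m}}_0 + m(u+v)\corr{p_m p_a}_0 + 2\sum_{k=1}^{m-1} k(m-k)\corr{p_k}_0 \corr{p_{m-k}p_a}_0,$$
then multiply both sides by $1/(x_1^{a+1} x_2^{m+2})$ and sum over $a \geq 1$, $m \geq 0$. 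After recognizing each sum, the left-hand side becomes $-(1/s)\,\partial_{x_2}(x_2 G_{0,2}(x_1,x_2))$. On the right, the linear term contributes $-((u+v)/x_2)\,\partial_{x_2}(x_2 G_{0,2}(x_1,x_2))$, the Cauchy-convolution collapses to $2\,\partial_{x_2}(x_2 G_{0,1}(x_2))\,\partial_{x_2}(x_2 G_{0,2}(x_1,x_2))$, and the $(a+m)\corr{p_{a+m}}_0$ term, after evaluating a partial geometric sum in $x_2/x_1$, produces
$$\frac{x_1\,\partial_{x_1}(x_1 G_{0,1}(x_1)) - x_2\,\partial_{x_2}(x_2 G_{0,1}(x_2))}{x_1 x_2(x_1 - x_2)}.$$

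Collecting, the resulting equation is linear in $\partial_{x_2}(x_2 G_{0,2}(x_1,x_2))$ with coefficient $C(x_2) := -1/s + (u+v)/x_2 - 2\,\partial_{x_2}(x_2 G_{0,1}(x_2))$. Using the identity
$$1 - \frac{s(u+v)}{x} - 2sG_{0,1}(x) = \sqrt{D(x)}, \qquad D(x) := 1 - \frac{2s(u+v)}{x} + \frac{s^2(u-v)^2}{x^2},$$
which follows directly from \eqref{eqn:G01}, $C(x_2)$ simplifies to an explicit expression involving $\sqrt{D(x_2)}$. Dividing through by $C(x_2)$ and integrating in $x_2$ (using the asymptotic condition $G_{0,2}(x_1,x_2) \to 0$ as $x_2 \to \infty$ to fix the integration constant) then gives a closed-form expression for $G_{0,2}(x_1,x_2)$. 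Symmetrizing in $x_1 \leftrightarrow x_2$, which is automatic from the form of the source, produces the announced factor $\sqrt{D(x_1) D(x_2)}$ in the denominator, while the $-1/(2(x_1-x_2)^2)$ counterterm emerges as the rational piece that must be subtracted so that the resulting series has no spurious pole along the diagonal $x_1 = x_2$.

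The main obstacle is the careful algebraic bookkeeping: in particular, correctly handling the boundary contribution at $m=0$ in the first source term, and separating the polar parts in $x_1 - x_2$ so that the single and double poles distribute into the two terms of the claimed formula. An independent consistency check is furnished by the observation that the announced formula is of the standard Bergman-kernel form on the genus-zero hyperelliptic curve $\eta^2 = x^2 - 2s(u+v)x + s^2(u-v)^2$ associated with the spectral curve derived from $G_{0,1}$; recognizing this makes the final expression exactly what is expected from the emergent-geometry viewpoint established in the analogous cases \cite{Zhou-Matrix, Zhou-Even}. As a further check, expanding the right-hand side of the claimed formula in powers of $1/x_1, 1/x_2$ and matching the first several coefficients against the already-computed generating series for $\corr{p_1 p_n}_0$ and $\corr{p_2 p_n}_0$ pins the formula down, since the dessin Virasoro constraints uniquely determine all genus-zero correlators from the single datum $\corr{p_1}_0 = suv$.
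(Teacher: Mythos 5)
Your overall strategy is the paper's: specialize the genus-zero Virasoro constraint \eqref{eqn:P2m} to a single spectator insertion, pass to generating functions, and close the computation with the identity $1-\tfrac{s(u+v)}{x}-2sG_{0,1}(x)=\sqrt{D(x)}$ coming from the explicit genus-zero one-point function. But the index-weight bookkeeping as you describe it changes the structure of the argument in a way that would derail the computation. The prefactor $\tfrac{m+1}{s}$ on the left of \eqref{eqn:P2m} is precisely the weight already built into the definition of $G_{0,2}$, so summing the constraint against the appropriate kernel yields an equation that is \emph{algebraic} in $G_{0,2}$, namely
\begin{equation*}
G_{0,2}(x_1,x_2)\Bigl(1-\frac{s(u+v)}{x_1}-2sG_{0,1}(x_1)\Bigr)=\text{source}(x_1,x_2),
\end{equation*}
solved by a single division by $\sqrt{D(x_1)}$; no integration in $x_2$ and no integration constant ever enter. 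Your first-order ODE, with the coefficient $C(x_2)$ written in terms of $\partial_{x_2}(x_2G_{0,1}(x_2))$ and the unknown appearing as $\partial_{x_2}(x_2G_{0,2})$, arises from mixing the ``bare'' generating functions (no index factors) with the index-weighted ones used in the statement: under the bare convention your $C(x_2)$ does equal $-\tfrac{1}{s}\sqrt{D(x_2)}$, but then the quantity your equation determines is a once-weighted two-point function, and the missing weight must be restored by applying $-\partial_{x_1}(x_1\,\cdot\,)$, not by integrating in $x_2$ (which would move further away from the target). Taken literally, the integration step is also internally problematic: the source you propose has a factor $\tfrac{1}{x_1-x_2}$, and its primitive in $x_2$ would generically be logarithmic, which cannot occur in $G_{0,2}$.

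The second concrete discrepancy is the source term itself. The term built from the one-point correlators, after performing the finite geometric-type sum in $x_1/x_2$, is (up to overall normalization)
\begin{equation*}
\frac{x_1G_{0,1}(x_1)-x_2G_{0,1}(x_2)}{x_1(x_1-x_2)^2}
-\frac{1}{x_1(x_1-x_2)}\frac{d}{dx_2}\bigl(x_2G_{0,1}(x_2)\bigr),
\end{equation*}
i.e.\ the operator $D_{x_0,x_j}$ of the paper applied to $G_{0,1}$; it carries a genuine double-pole piece together with a derivative term, and this is exactly where the numerator of \eqref{eqn:G02-Des} and the $-\tfrac{1}{2(x_1-x_2)^2}$ counterterm come from. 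Your single-pole expression is not equal to this. Finally, note that matching ``the first several coefficients'' against the claimed closed form is not a substitute for a proof: the uniqueness of the genus-zero solution of the Virasoro recursion is correct, but invoking it requires verifying that the closed form satisfies the recursion identically, which is precisely the computation sketched above.
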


\begin{proof}
For $m \geq 3$, $n \geq 1$,
\be
\begin{split}
& \frac{m}{s}  \corr{p_{m } \cdot p_n}_{0} \\
= &  (m+n-1) \cdot \corr{ p_{m+n-1}  }_{0}
+ (u+v) (m-1) \corr{p_{m-1} \cdot p_n} _{0} \\
 + & 2 \sum_{k=1}^{m-2}   k\corr{p_{k} \cdot   p_n}_{0}
\cdot (m-1-k) \corr{p_{m-1-k} }_{0},
\end{split}
\ee

Therefore,
\ben
G_{0,2}(x_1, x_2)
& = &  \sum_{m,n \geq 1}\corr{p_ {m} \cdot  p_{n} }^c_{0} \frac{m}{x_1^{m+1}} \frac{n}{x_2^{n+1}} \\
& = & s \sum_{m,n=1}^\infty (m+n-1) \cdot \corr{ p_{m+n-1}  }_{0}
\frac{1}{x_1^{m+1}} \frac{n}{x_2^{n+1}} \\
& + & s \sum_{m\geq 2,n \geq 1} (u+v) (m-1) \corr{p_{m-1} \cdot p_n} _{0}
\frac{1}{x_1^{m+1}} \frac{n}{x_2^{n+1}} \\
& + & 2 s \sum_{m \geq 3,n\geq 1}
\sum_{k=1}^{m-2}   k\corr{p_{k} \cdot  p_n}_{0}
\cdot (m-1-k) \corr{p_{m-1-k} }_{0} \frac{1}{x_1^{m+1}} \frac{n}{x_2^{n+1}} \\
& = & s \sum_{l \geq 1} l \cdot \corr{ p_l }_{0}^c
\sum_{m+n=l+1}\frac{1}{x_1^{m+1}} \frac{n}{x_2^{n+1}}  \\
& + & \frac{s(u+v)}{x_1}G_{0,2}(x_1,x_2)+ 2s G_{0,1}(x_1) \cdot G_{0,2}(x_1,x_2),
\een
so we have:
\ben
G_{0,2}(x_1, x_2)
& = &  s \biggl(1-\frac{s(u+v)}{x_1}-2sG_{0,1}(x_1) \biggr)^{-1}
 \cdot \sum_{l \geq 1}
\corr{ p_{l} }_{0}^c
\sum_{n=1}^{l} \frac{1}{x_1^{l+2-n}}
\frac{n}{x_2^{n+1}}  \\
& = & \frac{s}{\sqrt{1-\frac{s(u+v)}{x_1} + \frac{s^2(u-v)^2}{x_1^2} }}
  \sum_{l \geq 1}
\corr{ p_{l} }_{0}^c
\sum_{n=1}^{l} \frac{1}{x_1^{l+2-n}}
\frac{n}{x_2^{n+1}} .
\een

Furthermore,
\ben
&&  \sum_{l \geq 1}
\corr{ p_{l} }_{0}^c
\sum_{n=1}^{l} \frac{1}{x_1^{l+2-n}}
\frac{n}{x_2^{n+1}} \\
& = &  \frac{1}{x_1(x_1-x_2)^2}
(x_1G_{0,1}(x_1) - x_2 G_{0,1}(x_2))
- \frac{1}{x_1(x_1-x_2)} \frac{d}{dx_2}(x_2 G_{0,1}(x_2)) \\
& = & \frac{1-\frac{s(u+v)}{x_1}-\frac{s(u+v)}{x_2} + \frac{s^2(u-v)^2}{x_1x_2} }
{2s(x_1-x_2)^2\sqrt{(1-\frac{2s(u+v)}{x_2}+ \frac{s^2(u-v)^2}{x_2^2})}} \\
& - & \frac{ \sqrt{(1-\frac{2s(u+v)}{x_1}+ \frac{s^2(u-v)^2}{x_1^2})} }
{2s(x_1-x_2)^2 },
\een
where in the last equality we have used \eqref{eqn:G01}.
Therefore,
we have  proved \eqref{eqn:G02}.
\end{proof}

\subsection{Computation of $n$-point functions in arbitrary genera by Virasoro constraints}

By \eqref{eqn:P2m} we have:
\ben
&& G_{g,n}(x_0, x_1, \dots, x_n) \\
& = & \sum_{m, a_1, \dots, a_n \geq 1}
\corr{p_{m} \cdot p_{a_1} \cdots p_{a_n}}^c_{g}  \frac{m}{x_0^{m+1}}\prod_{i=1}^n \frac{a_i}{x_i^{a_i+1}} \\
& = & s \sum_{m, a_1, \dots, a_n \geq 1}
\sum_{j=1}^n \corr{p_{a_1} \cdots
p_{a_j+m-1} \cdots p_{2a_n}}_{g}^c
\frac{a_j+m-1}{x_0^{m+1}}\prod_{i=1}^n \frac{a_i}{x_i^{a_i+1}} \\
& + & s(u+v) \sum_{m \geq 2, a_1, \dots, a_n \geq 1}
\corr{p_{m-1} \cdot p_{a_1} \cdots p_{a_n} }^c_{0}
\frac{m-1}{x_0^{m+1}}\prod_{i=1}^n \frac{a_i}{x_i^{a_i+1}}\\
& + & s \sum_{k=1}^{m-2} k(m-1-k)\corr{p_{k}p_{m-1-k} \cdot
p_{a_1} \cdots p_{a_n}}^c_{g-1}
\frac{1}{x_0^{m+1}}\prod_{i=1}^n \frac{a_i}{x_i^{a_i+1}} \\
& + & s\sum_{k=1}^{m-2} \sum_{\substack{g_1+g_2=g\\I_1 \coprod I_2 = [n]}}
k \corr{p_{k} \cdot \prod_{i\in I_1} p_{a_i} }^c_{g_1}
\cdot (m-1-k)\corr{p_{m-1-k} \cdot \prod_{i\in I_2} p_{a_i} }^c_{g_2}\\
&& \cdot \frac{1}{x_0^{m+1}}\prod_{i=1}^n \frac{a_i}{x_i^{a_i+1}} .
\een

The following operators are introduced in \cite{Zhou-Even}:
\ben
D_{x_0,x_j}f(x_j)
& = & \frac{x_0f(x_0) -x_j f(x_j)}{x_0(x_0-x_j)^2}
- \frac{1}{x_0(x_0-x_j)} \frac{d}{d x_j}
\big(x_jf(x_j) \big) \\
& = & \frac{\pd}{\pd x_j} \biggl( \frac{x_0f(x_0) - x_jf(x_j)}{x_0(x_0-x_j)}\biggr),
\een
\ben
E_{x_0, u,v} f(u,v) =   \lim_{u \to v} f(u,v)|_{v = x_0}.
\een
With the help these operators,
the dessin Virasoro constraints can be reformulated as follows.

\begin{prop}
Define the renormalized operators $\tilde{D}$ and $\tilde{E}$ as follows:
\begin{align}
\tilde{D}_{x_0,x_j} & = \frac{sD_{x_0,x_j}}{1-2s W_{0,1}(x_0)}  , &
\tilde{E}_{x_0, u,v} & = \frac{sE_{x_0, u,v}}{1-2s W_{0,1}(x_0)}  .
\end{align}
Then one has:
\be \label{eqn:G-g-n}
\begin{split}
&  G_{g,n+1}(x_0, x_1, \dots, x_n) \\
= & \sum_{j=1}^n \tilde{D}_{x_0,x_j}
G_{g,n}(x_1, \dots, x_n)
 + \tilde{E}_{x_0, u,v} G_{g-1,n+2}(u,v, x_1, \dots, x_n) \\
+ & \sum_{\substack{g_1+g_2=g\\I_1 \coprod I_2=[n]}}\;'
\tilde{E}_{x_0, u,v}
\biggl( G_{g_1,|I_1|+1}(u, x_{I_1}) \cdot
G_{g_2,|I_2|+1}(v, x_{I_2}) \biggr).
\end{split}
\ee
\end{prop}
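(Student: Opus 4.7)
The plan is to start from the dessin Virasoro constraint \eqref{eqn:P2m}, apply the index shift $m \mapsto m-1$, multiply through by the kernel $\frac{1}{x_0^{m+1}} \prod_{i=1}^n \frac{a_i}{x_i^{a_i+1}}$, and sum over $m \geq 1$ and $a_1,\dots,a_n \geq 1$. The LHS then produces $\frac{1}{s} G_{g,n+1}(x_0, x_1, \dots, x_n)$, and the task is to recognize each of the four resulting generating series on the RHS as one of the pieces in \eqref{eqn:G-g-n}.

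For the first term, $\sum_j (a_j+m-1) \corr{p_{a_1}\cdots p_{a_j+m-1} \cdots p_{a_n}}_g^c$, a direct Laurent expansion using $\frac{x_0^l - x_j^l}{x_0 - x_j} = \sum_{k=0}^{l-1} x_0^{l-1-k} x_j^k$ shows that $D_{x_0,x_j}$ is exactly tailored so that $\sum_j D_{x_0,x_j} G_{g,n}(x_1,\dots,x_n)$ reproduces it. The linear term $(u+v)(m-1)\corr{p_{m-1} \cdots}_g^c$ contributes $\frac{u+v}{x_0} G_{g,n+1}(x_0,x_1,\dots,x_n)$. Setting $l = k$, $l' = m-1-k$, the double-derivative term matches $E_{x_0,u,v} G_{g-1,n+2}(u,v,x_1,\dots,x_n)$ under the formal substitution $\frac{1}{u^{l+1} v^{l'+1}} \mapsto \frac{1}{x_0^{l+l'+2}}$ implemented by $E_{x_0,u,v}$; and the same substitution identifies the convolution term with $\sum E_{x_0,u,v}\bigl(G_{g_1,|I_1|+1}(u,x_{I_1}) G_{g_2,|I_2|+1}(v,x_{I_2})\bigr)$.

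The final step is to isolate the degenerate cases $(g_1,I_1) = (0,\emptyset)$ and its mirror from the convolution sum; these each yield $G_{0,1}(x_0) \cdot G_{g,n+1}(x_0,x_1,\dots,x_n)$. Transposing them and the $\frac{u+v}{x_0}$ contribution to the LHS produces a prefactor $1 - \frac{s(u+v)}{x_0} - 2sG_{0,1}(x_0)$ in front of $G_{g,n+1}$, which is exactly the factor $1 - 2sW_{0,1}(x_0)$ for the appropriate choice of $W_{0,1}$. Dividing through converts $sD$ into $\tilde D$ and $sE$ into $\tilde E$, and what remains in the convolution sum is exactly the primed sum of \eqref{eqn:G-g-n}. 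The principal difficulty is the combinatorial matching for the $E$-operator: one must verify that the limit-style substitution reproduces the coefficient $k(m-1-k)$ under the bidegree constraint $k + (m-1-k) = m-1$ after the shift, in parallel with the computations already carried out in \cite{Zhou-Even, Zhou-Matrix}.
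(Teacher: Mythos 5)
Your proposal is correct and follows essentially the same route as the paper: the paper likewise sums the shifted Virasoro constraint \eqref{eqn:P2m} against the kernel $\frac{1}{x_0^{m+1}}\prod_i \frac{a_i}{x_i^{a_i+1}}$, identifies the shift term with $D_{x_0,x_j}$ and the quadratic terms with $E_{x_0,u,v}$, and (implicitly) absorbs the $\frac{u+v}{x_0}G_{g,n+1}$ term and the two degenerate $(0,1)$ factors of the convolution into the prefactor $1-\frac{s(u+v)}{x_0}-2sG_{0,1}(x_0)=1-2sW_{0,1}(x_0)$. In fact your write-up makes explicit the renormalization step that the paper leaves to the reader, and your check of the geometric-sum identity behind $D_{x_0,x_j}$ is exactly the needed verification.
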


\subsection{Examples}
\label{sec:G03G11} 

We  now present some sample computations  of $G_{g,n}$ using \eqref{eqn:G-g-n}.

\subsubsection{Three-point function in genus zero}

\ben
&&  G_{0,3}(x_0, x_1, x_2) \\
& = & \sum_{j=1}^2 \tilde{D}_{x_0,x_j} G_{0,2}(x_1, x_2)
+ 2 \tilde{E}_{x_0, u,v}
\biggl( G_{0,2}(u, x_1) \cdot G_{0,2}(v, x_2) \biggr) \\
& = & \frac{2s^3uv(1-(u-v)^2s^2(\frac{1}{x_0x_1}+\frac{1}{x_1x_2}+\frac{1}{x_2x_0}) + 2(u+v)(u-v)^2\frac{s^3}{x_0x_1x_2})}
{\prod_{j=0}^2 x_j^2(1-2(u+v)s/x_j+(u-v)^2s^2/x_j^2)^{3/2}}.
\een

\subsubsection{One-point function in genus one}

In this case \eqref{eqn:G-g-n} takes the form:
\ben
G_{1,1}(x_0)
& = &  \tilde{E}_{x_0, u,v} G_{0,2}(u,v)  = \frac{uvs^3}{x_0^4(1- 2(u+v)s/x_0+(u-v)^2/x_0^2)^{5/2} }.
\een
The following are the first few terms:
\ben
G_{1,1}(x_0) & = & \frac{uvs^3}{x_0^4}+\frac{5(u+v)s^4}{x_0^5}
+\frac{(15 u^2+40 u v+15 v^2)s^5}{x_0^6} \\
& +& \frac{35 (u+v) (u^2+4 u v+v^2)s^6}{x_0^7} + \cdots.
\een

\section{The Emergent Geometry of Dessin Tau-Function}
\label{sec:Emergent}

In this Section we show that the dessin Virasoro constraints
implies that the $n$-point functions associated to the dessin tau-function
satisfy the EO topological recursions on a suitable spectral curve.
We take a different approach from that of \cite{Kaz-Zog}.

\subsection{Dessin spectral curve and its special deformation}

For $n \geq 0$,
the dessin Virasoro constraints in genus zero are:
\ben
&&  (u +v)n \frac{\pd F_0}{\pd p_n}
+ \sum_{j=1}^\infty \biggl(p_j- \frac{\delta_{j,1}}{s} \biggr)(n + j) \frac{\pd F_0}{\pd p_{n+ j}} \\
& + &  \sum_{i + j=n} i j \frac{\pd F_0}{\pd p_i} \frac{\pd F_0}{\pd p_j}
+\delta_{n,0}uv  = 0.
\een
Consider the following series:
\be \label{def:Dessin-Special}
y = \frac{1}{2} \sum_{n=1}^\infty
(p_n - \frac{\delta_{n,1}}{s}) x^{n-1}
+ \frac{u+v}{2x} + \sum_{n = 1}^\infty
\frac{n}{x^{n+1}} \frac{\pd F_{0}(t)}{\pd p_n}.
\ee
One has:
\ben
y^2 & = & \frac{1}{4} \bigg(\sum_{n=1}^\infty
(p_n - \frac{\delta_{n,1}}{s}) x^{n-1} \bigg)^2
+ \frac{(u+v)^2}{4x^2} +\sum_{i,j = 1}^\infty
\frac{ij}{x^{i+j+2}} \frac{\pd F_{0}(t)}{\pd p_i} \frac{\pd F_0(t)}{\pd p_j} \\
& + & \frac{u+v}{2} \sum_{n=1}^\infty (p_n - \frac{\delta_{n,1}}{s}) x^{n-2}
+ (u+v) \sum_{n = 1}^\infty
\frac{n}{x^{n+2}} \frac{\pd F_{0}(t)}{\pd p_n} \\
& + & \sum_{j=1}^\infty
(p_j - \frac{\delta_{j,1}}{s})  \sum_{k = 1}^\infty
\frac{k}{x^{k-j+2}} \frac{\pd F_{0}(t)}{\pd p_k}.
\een
Now by Viraroso constraints,
\ben
y^2 & = & \frac{1}{4} \bigg(\sum_{n=1}^\infty
(p_n - \frac{\delta_{n,1}}{s}) x^{n-1} \bigg)^2 + \frac{(u-v)^2}{4x^2}
+ \frac{u+v}{2} \sum_{n=1}^\infty (p_n - \frac{\delta_{n,1}}{s}) x^{n-2} \\
& + & \sum_{k \geq 1} \sum_{j \geq k+1}
(p_j - \frac{\delta_{j,1}}{s})
kx^{j-k-2} \frac{\pd F_{0}(t)}{\pd p_k}.
\een
When $p_n = 0 $ for all $p_n$:
\be
y^2 = \frac{1}{4s^2} - \frac{u+v}{2sx} + \frac{(u-v)^2}{4x^2} .
\ee
We refer to this plane algebraic curve as 
the {\em dessin spectral curve}.
We call \eqref{def:Dessin-Special} the {\em special deformation}
of the dessin spectral curve.

\subsection{Genus zero one-point function and the spectral curve}

Let us take $p_{k} = 0$ in \eqref{def:Dessin-Special} to get:
\be
y=- \frac{1}{2s} + \frac{u+v}{2x} + \sum_{k=1}^\infty \frac{1}{x^{k+1}} 
\frac{\pd F_0}{\pd p_{k}}\biggl|_{p_{n}= 0, n\geq 1}.
\ee
By the definition of correlators and $G_{0,1}(x)$,
\be
y = - \frac{1}{2s} + \frac{u+v}{2x} + G_{0,1}(x).
\ee
By the formula \eqref{eqn:G01} for $G_{0,1}$,
\be
y   = - \frac{1}{2s} \sqrt{1-\frac{2(u+v)s}{x}+ \frac{(u-v)^2s^2}{x^2} },
\ee
and so
\be \label{eqn:Spectral-Des}
y^2 = \frac{1}{4s^2} \biggl(1-\frac{2(u+v)s}{x}+ \frac{(u-v)^2s^2}{x^2} \biggr).
\ee
Note
\be
y= 0 \Leftrightarrow x = s(u+v\pm 2 \sqrt{u}) = s(\sqrt{u} \pm \sqrt{v})^2,
\ee
and so the above equation defines a hyperelliptic rational curve with
a branched covering ramified over $x = s(\sqrt{u} \pm \sqrt{v})^2$.
This is an example of {\em one-cut solution}.
Following \cite{Kaz-Zog},
introduce the following coordinate that globally parameterize the dessin spectral curve:
\be
z = \sqrt{\frac{1-s\beta /x}{1-s\alpha/x}}, \;\;\; \alpha = (\sqrt{u} - \sqrt{v})^2,
\;\; \beta = (\sqrt{u} + \sqrt{v})^2.
\ee
This function in $x$ takes the ramification point $(x= s \beta, y=0)$ to $z=0$,
and $(x= s\alpha, y=0)$ to $z= \infty$.
The functions $x$ and $y$ are two rational functions on this curve,
expressed explicitly in terms of $z$ as follows:
\begin{align}
x(z) & = \frac{s(\alpha z^2 -\beta)}{z^2-1}, &
y(z) & = -\frac{(\alpha-\beta) z}{2s (\alpha z^2  - \beta)}.
\end{align}

There is a natural hyperelliptic structure on this curve: One can define an involution $
p \mapsto \sigma(p)$ by
\be
\sigma(x, y) = (x, -y).
\ee
On the $z$-coordinate,
it is given by
\be
\sigma(z) = -z.
\ee

\subsection{Multilinear differential forms on the dessin spectal curve}

With the introduction of the spectral curve,
one can regard the genus $g$ $n$-point correlation functions
$G_{g,n}(x_1, \dots, x_n)$ as functions on it.
We understand $x$ and $y$ as meromorphic function on the spectral curve.
For a point $p_j$ on it,
we write
\begin{align}
x_j & = x(p_j), & y_j & = y(p_j), & z_j = z(p_j).
\end{align}
Then $G_{g,n}(x_1, \dots, x_n)$ can be written as rational functions in $z_1, \dots, z_n$.
One can also consider the multilinear differential forms:
\be \label{def:Omega}
W_{g,n}(p_1, \dots, p_n)
= \hat{G}_{g,n}(y_1, \dots, y_n) dx_1 \cdots dx_n,
\ee
where $\hat{G}_{g,n}(y_1, \dots, y_n)=G_{g,n}(y_1, \dots, y_n)$ except for the following two exceptional cases:
\bea
&& \hat{G}_{0,1}(y_1) = -\frac{1}{2s}+ \frac{(u+v)}{2x_1} + G_{0,1}(y_1), \\
&& \hat{G}_{0,2}(y_1,y_2) = \frac{1}{(x_1-x_2)^2}  + G_{0,2}(y_1,y_2).
\eea
Since $\hat{G}_{0,1}(y_1) = y_1$, so we have:
\be \label{eqn:Omega-0-1}
W_{0,1}(p_1) = y_1dx_1 = \frac{(\alpha-\beta)^2z_1^2}{(\alpha z_1^2-\beta) (z_1^2-1)^2}dz_1.
\ee
By some straightforward  computations using \eqref{eqn:G02-Des},
one can get:
\be \label{eqn:Omega-0-2}
W_{0,2}(p_1,p_2)
= \frac{dz_1dz_2}{(z_1-z_2)^2}.
\ee
By the results for $G_{0,3}(x_1,x_2,x_3)$ and $G_{1,1}(x_1)$ in \S \ref{sec:G03G11},
we get:
\ben
&& W_{0,3}(p_1, p_2, p_3)
=  G_{0,3}(x_1, x_2, x_3) dx_1dx_2dx_3\\
& = & \frac{2s^3uv(1-(u-v)^2s^2\sum\limits_{j=1}^3
\frac{1}{x_jx_{j+1}}
+ 2(u+v)(u-v)^2\frac{s^3}{x_1x_2x_3})}
{\prod\limits_{j=1}^3 x_j^2(-2sy_j)^3}dx_1dx_2dx_3\\
& = & \frac{1}{(\alpha-\beta)^2} (-\alpha + \frac{\beta}{z_1^2z_2^2z_3^2}) dz_1dz_2dz_3,
\een
where $x_4=x_1$,
and
\ben
&& W_{1,1}(p_1) = G_{1,1}(x_1) dx_1 \\
& = &
\frac{uvs^3}{x_1^4(1- 2(u+v)s/x_1+(u-v)^2/x_1^2)^{5/2} }dx_1
= \frac{uvs^3}{x_1^4(-2sy_1)^5 }dx_1 \\
& = & \frac{1}{8(\alpha-\beta)^2}
\biggl( \frac{\beta}{z^4}- \frac{2\beta+\alpha}{z^2}
+ (2\alpha + \beta) - \alpha z^2 \biggr) dz.
\een

\subsection{Eynard-Orantin topological recursions}

We use $W_{0,2}$ as the Bergman kernel.
Then
\ben
\int_{q=\sigma(p_2)}^{p_2} B(p_1, q)
& = & \int_{z=-z_2}^{z_2} \frac{dz_1dz}{(z_1-z)^2}
 = \frac{dz_1}{z_1-z} \biggl|_{z=-z_2}^{z_2} = \frac{2z_2dz_1}{z_1^2-z_2^2}.
\een
It follows that
\be
K(p_0, p) = \frac{2z dz_0}{2(z_0^2-z^2)\cdot 2ydx}
= \frac{(\alpha z^2-\beta) (z^2-1)^2}{2 (\alpha-\beta)^2z(z_0^2-z^2)} \cdot \frac{dz_0}{dz}.
\ee
This has simple poles at $z=0$ and $z= \pm z_0$.
To understand its behavior at $z = \infty$,
let $z =1/\tilde{z}$ and $z_0 = 1/\tilde{z}_0$.
Then
\be
K(p_0, p) =\frac{(\beta  \tilde{z}^2 - \alpha) (\tilde{z}^2-1)^2}
{2 (\alpha-\beta)^2\tilde{z}(\tilde{z}_0^2-\tilde{z}^2) }
\cdot \frac{d\tilde{z}_0}{d\tilde{z}}.
\ee
This has a simple pole at $\tilde{z} = 0$.

Let us carry out the first few calculations of Eynard-Orantin recursion
for the spectral curve \eqref{eqn:Spectral-Des} with $\omega_{0,1}=W_{0,1}$ and $\omega_{0,2}=W_{0,2}$
given by \eqref{eqn:Omega-0-1} and \eqref{eqn:Omega-0-2} respectively.
\ben
&& \omega_{0,3}(p_0,p_1,p_2) \\
& = & (\Res_{p \to p_+} + \Res_{p \to p_-} ) \biggl(  K(p_0, p)
[W_{0,2}(p, p_1) W_{0,2}(\sigma(p),p_{2}) \\
&& + W_{0,2}(p, p_2) W_{0,2}(\sigma(p), p_1)]  \biggr) \\
& = &  \Res_{z \to 0}  \frac{(\alpha z^2-\beta) (z^2-1)^2}{2 (\alpha-\beta)^2z(z_0^2-z^2)}
\cdot \frac{dz_0}{dz} \\
&& \cdot  \biggl( \frac{dz dz_1}{(z-z_1)^2} \cdot \frac{-dz dz_2}{(-z-z_2)^2}
+ \frac{dzdz_2}{(z-z_2)^2} \cdot \frac{-dzdz_1}{(-z-z_1)^2} \biggr)  \\
& + &  \Res_{\tilde{z} \to 0}
\frac{( \beta \tilde{z}^2-\alpha ) (\tilde{z}^2-1)^2}{2 (\alpha-\beta)^2\tilde{z}(\tilde{z}_0^2-\tilde{z}^2)}
\cdot \frac{d\tilde{z}_0}{d\tilde{z}} \\
&& \cdot \biggl( \frac{d\tilde{z} d\tilde{z}_1}{(\tilde{z}-\tilde{z}_1)^2} \cdot
\frac{-d\tilde{z} d\tilde{z}_2}{(-\tilde{z}-\tilde{z}_2)^2}
+ \frac{d\tilde{z}d\tilde{z}_2}{(\tilde{z}-\tilde{z}_2)^2} \cdot
\frac{-d\tilde{z}d\tilde{z}_1}{(-\tilde{z}-\tilde{z}_1)^2} \biggr)  \\
& = &   \frac{ \beta }{2 (\alpha-\beta)^2  }
\cdot \frac{dz_0}{z_0^2} \cdot  \frac{dz_1}{z_1^2} \cdot \frac{ dz_2}{z_2^2}
+  \frac{\alpha }{2 (\alpha-\beta)^2 }
\cdot \frac{d\tilde{z}_0}{\tilde{z}_0^2}
  \cdot \frac{ d\tilde{z}_1}{ \tilde{z}_1^2} \cdot
\frac{ d\tilde{z}_2}{\tilde{z}_2^2}   \\
& = & \biggl( \frac{ \beta }{(\alpha-\beta)^2  }
\cdot \frac{1}{z_0^2z_1^2z_2^2}
- \frac{\alpha }{(\alpha-\beta)^2 } \biggr) dz_0dz_1dz_2 \\
& = & W_{0,3}(p_0, p_2, p_2).
\een

\ben
\omega_{1,1}(p_0) & = & (\Res_{p\to p_+} + \Res_{p\to p_-})
\biggl( K(p_0, p) W_{0,2}(p, \sigma(p)) \biggr) \\
& = & \Res_{z\to 0} \frac{(\alpha z^2-\beta) (z^2-1)^2}{2 (\alpha-\beta)^2z(z_0^2-z^2)}
\cdot \frac{dz_0}{dz} \cdot \frac{-(dz)^2}{4z^2} \\
& + & \Res_{\tilde{z}\to 0} \frac{(\beta \tz^2-\alpha) (\tz^2-1)^2}{2 (\alpha-\beta)^2\tz(\tz_0^2-\tz^2)}
\cdot \frac{d\tz_0}{d\tz} \cdot \frac{-(d\tz)^2}{4\tz^2} \\
& = & \frac{1}{8(\alpha-\beta)^2}
(\frac{\beta}{z_0^4} - \frac{\alpha+2\beta}{z_0^2} )dz_0
+ \frac{1}{8(\alpha-\beta)^2}
(\frac{\alpha}{\tz_0^4} - \frac{\beta+2\alpha}{\tz_0^2} )d\tz_0\\
& = & \frac{1}{8(\alpha-\beta)^2}
\biggl(\frac{\beta}{z_0^4}
- \frac{\alpha+2\beta}{z_0^2}
+ (\beta+2\alpha) - \alpha z_0^2 \biggr)d z_0\\
& = & W_{1,1}(p_0).
\een

\begin{thm} \label{Thm:Main1}
When $2g-2+n > 0$,
the multi-linear differential forms $W_{g,n}(p_1, \dots, p_n)$
defined by \eqref{def:Omega} has the following form
\be \label{eqn:W-w}
\begin{split}
W_{g,n}(p_1, \dots, p_n)
= & w_{g,n}(z_1, \dots, z_n) dz_1 \cdots dz_n \\
= & \tilde{w}_{g,n}(\tz_1, \dots, \tz_n) d\tz_1\cdots d\tz_n,
\end{split}
\ee
where $w_{g,n}(z_1, \dots, z_n)$  is a Laurent polynomial in $z_1, \dots, z_n$
which has only terms of even degrees with respect to each variable,
similarly for $\tilde{w}_{g,n}(\tz_1, \dots, \tz_n)$.
Furthermore,
they satisfy the Eynard-Orantin topological recursions
given by the spectral curve
\be
y^2 = \frac{1}{4s^2} - \frac{u+v}{2sx} + \frac{(u-v)^2}{4x^2}.
\ee
I.e.,
we have
\be
W_{g,n}(p_1, \dots, p_n)
= \omega_{g,n}(p_1, \dots, p_n).
\ee
\end{thm}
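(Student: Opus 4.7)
The plan is to proceed by induction on $2g-2+n$, using the Virasoro-based recursion \eqref{eqn:G-g-n} on the left-hand side and the Eynard-Orantin recursion on the right-hand side, and showing that the two coincide term by term after translation from the $x$-coordinate to the global uniformizer $z$. The base cases $(g,n)=(0,3)$ and $(1,1)$ have already been verified explicitly in the computations preceding the theorem, and $W_{0,1}$ and $W_{0,2}$ are identified with the input data of the spectral curve. So the only task is to match the inductive step.

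First I would recast \eqref{eqn:G-g-n} as an identity of differential forms. Multiplying through by $dx_0\,dx_1\cdots dx_n$ converts the correlators $G_{g,n}$ into the multilinear forms $W_{g,n}$ (with the exceptional corrections built into $\hat{G}_{0,1}$ and $\hat{G}_{0,2}$), and the normalizing factor $1-2sW_{0,1}(x_0)$ hiding in $\tilde D$ and $\tilde E$ becomes precisely $-2sy(z_0)\,dx_0$, i.e.\ the denominator of the Eynard-Orantin kernel $K(p_0,p)$. The next step, and the technical heart of the argument, is to show that the operators $\tilde D_{x_0,x_j}$ and $\tilde E_{x_0,u,v}$ can be realized as residues of $K(p_0,p)$ at the two ramification points $z=0$ and $\tilde z=1/z=0$. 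Concretely, for a rational form $f(z_j)\,dz_j$ that is even in $z_j$ (equivalently, expressed through the hyperelliptic involution $\sigma(z)=-z$), one checks by a direct partial-fraction computation that
\begin{equation}
\tilde D_{x_0,x_j} f(x_j) \, dx_0 dx_j = \bigl(\Res_{p\to p_+} + \Res_{p\to p_-}\bigr) K(p_0,p)\bigl[f(p)dx + f(\sigma(p))dx\bigr] \frac{1}{dx_j},
\end{equation}
and similarly $\tilde E_{x_0,u,v}$ reproduces the diagonal term $W_{0,2}(p,\sigma(p))$ contribution; this is the same computation carried out for $W_{0,3}$ and $W_{1,1}$ in the preceding subsection, and it is the model for the general case.

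To make this residue identification legitimate one needs to know in advance that each $W_{g,n}$ with $2g-2+n>0$, when written in the $z_j$-coordinates, is a Laurent polynomial in each $z_j$ containing only even-degree terms, so that the pairs $(z,-z)$ always appear symmetrically and no contributions are lost at $z=\infty$ other than those at $\tilde z=0$. I would establish this structural fact by the same induction: the base cases $W_{0,3}$ and $W_{1,1}$ manifestly have this form; the kernel $K(p_0,p)$ times a pair of even Laurent polynomials (coming from $W_{g_1,|I_1|+1}\cdot W_{g_2,|I_2|+1}$ or from $W_{g-1,n+2}(u,v,\dots)$ symmetrized in $u\leftrightarrow v$) has only poles at $z=0$, $\tilde z=0$, and $z=\pm z_0$, so the residues at the ramification points give a rational function of $z_0$ with poles only at $0$ and $\infty$ and even parity; this is automatically a Laurent polynomial of the required type. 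The $z\mapsto -z$ symmetry of the correlators is itself a consequence of the involution invariance of $x(z)=x(-z)$, which forces each $G_{g,n}$ to be a function of $x_j$'s alone.

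The main obstacle I anticipate is the bookkeeping in the inductive step: the left-hand side \eqref{eqn:G-g-n} is naturally a sum indexed by the distinguished variable $x_0$ and the remaining $x_j$'s, whereas the Eynard-Orantin recursion groups terms by how $p$ and $\sigma(p)$ distribute among the $W$-factors. Matching the ``connected'' $\tilde D$-sum against the $W_{0,2}(p,p_j)\,W_{g,n}(\sigma(p),\dots)$ contributions, and the $\tilde E$-term against the $W_{0,2}(p,\sigma(p))$ and splitting $W_{g_1}\cdot W_{g_2}$ contributions, requires carefully tracking the primed sum $\sum'$ (excluding unstable $(0,1)$ factors) and the $W_{0,2}$-subtraction inherent in $\hat G_{0,2}$. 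Once this combinatorial matching is in place, combined with the residue identifications above, both sides satisfy the same recursion with the same initial data $W_{0,1},W_{0,2},W_{0,3},W_{1,1}$, and the uniqueness of solutions to the Eynard-Orantin recursion forces $W_{g,n}=\omega_{g,n}$ for all $(g,n)$ with $2g-2+n>0$.
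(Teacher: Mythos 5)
Your plan is essentially the paper's own proof: induction on $2g-2+n$ with the already-verified cases $\omega_{0,3}$, $\omega_{1,1}$ as base, recasting the Virasoro recursion \eqref{eqn:G-g-n} in form language so that the normalization $1-2sW_{0,1}$ becomes the kernel denominator $-2s\,y\,dx$, matching the stable product and $W_{g-1,n+2}$ terms directly against $\tilde{E}$, and reducing the delicate $W_{0,2}$-paired terms versus $\tilde{D}$ plus the $G_{0,2}$-cross-terms to a residue/partial-fraction identity checked on even Laurent monomials $z^{2n}dz$, with the even-Laurent-polynomial structure carried along in the same induction. The only caveat is that your displayed identity pairs $\tilde{D}_{x_0,x_j}$ alone with the kernel residue of the $W_{0,2}$-terms, whereas (as your own bookkeeping paragraph later acknowledges) that residue equals $\tilde{D}$ \emph{plus} the $\tilde{E}$-contribution of the $G_{0,2}$-factors, the discrepancy being exactly the $1/(x_1-x_2)^2$ shift between $G_{0,2}$ and $\hat{G}_{0,2}$.
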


\begin{proof}
We have explicitly check the case of $\omega_{0,3}$ and $\omega_{1,1}$.
We now show that other cases can be checked by induction.
First,
assume the Eynard-Orantin topological recursions are satisfied.
By  the induction hypothesis,
\ben
&& \omega_{g,n+1}(p_0,p_1, \dots, p_n) =
(\Res_{z\to 0} +\Res_{z\to \infty}) K(z_0, z)  \\
&& \cdot  \biggl[ w_{g-1, n+2}(z, -z, z_{[n]}) dzd(-z)   \\
& + & \sum^g_{h=0} \sum_{I \subset [n]}'
w_{h,|I|+1}(z, z_I) w_{g-h, n-|I|+1}(-z, z_{[n]-I}) dzd(-z) \biggr] dz_1 \cdots dz_n.
\een
There are three kinds of terms to consider on the right-hand side.
First,
\ben
&& (\Res_{z\to 0} +\Res_{z\to \infty}) K(z_0, z) \cdot   w_{g-1, n+2}(z, -z, z_{[n]}) dzd(-z) \\
& = & - (\Res_{z\to 0} +\Res_{z\to \infty})
\frac{(\alpha z^2-\beta) (z^2-1)^2}{2 (\alpha-\beta)^2z(z_0^2-z^2)}
\cdot    w_{g-1, n+2}(z, z, z_{[n]}) dz\\
& = &  (\Res_{z\to z_0} +\Res_{z\to -z_0})
\frac{(\alpha z^2-\beta) (z^2-1)^2}{2 (\alpha-\beta)^2z(z_0^2-z^2)}
\cdot    w_{g-1, n+2}(z, z, z_{[n]}) dz \\
& = & - \frac{(\alpha z_0^2-\beta) (z_0^2-1)^2}{2 (\alpha-\beta)^2z_0^2}
\cdot  w_{g-1, n+2}(z_0, z_0, z_{[n]}).
\een
Secondly,
when none of $(h, |I|+1)$ and $(g-h, n-|I|+1)$ is $(0,2)$,
\ben
&& (\Res_{z\to 0} +\Res_{z\to \infty}) K(z_0, z) \cdot
w_{h,|I|+1}(z, z_I) w_{g-h, n-|I|+1}(z, z_{[n]-I}) dzd(-z)   \\
& = &  - (\Res_{z\to 0} +\Res_{z\to \infty})
\frac{(\alpha z^2-\beta) (z^2-1)^2}{2 (\alpha-\beta)^2z(z_0^2-z^2)} \\
&& \cdot  w_{h,|I|+1}(z, z_I) w_{g-h, n-|I|+1}(z, z_{[n]-I})   dz \\
& = &    (\Res_{z\to z_0} +\Res_{z\to -z_0})
\frac{(\alpha z^2-\beta) (z^2-1)^2}{2 (\alpha-\beta)^2z(z_0^2-z^2)} \\
&& \cdot w_{h,|I|+1}(z_0, z_I) w_{g-h, n-|I|+1}(z_0, z_{[n]-I})   dz \\
& = &  -  \frac{(\alpha z_0^2-\beta) (z_0^2-1)^2}{2 (\alpha-\beta)^2z_0^2}
\cdot w_{h,|I|+1}(z_0, z_I) w_{g-h, n-|I|+1}(z_0, z_{[n]-I}) .
\een
Thirdly
we need to consider
\ben
&& (\Res_{z\to 0} +\Res_{z\to \infty}) K(z_0, z) \cdot
(w_{0,2}(z, z_i) w_{g, n}(-z, z_{[n]-\{i\}})  \\
&& + w_{0,2}(-z, z_i) w_{g, n}(z, z_{[n]-\{i\}})) dzd(-z)   \\
& = & dz_0 \cdot (\Res_{z\to 0} +\Res_{z\to \infty})
\frac{(\alpha z^2-\beta) (z^2-1)^2}{2 (\alpha-\beta)^2z(z_0^2-z^2)} \\
&& \cdot \biggl( \frac{1}{(z-z_i)^2} w_{g,n}(z,z_{[n]-\{i\}})
+ \frac{1}{(-z-z_i)^2} \cdot w_{g,n}(z, z_{[n]-\{i\}}) \biggr)dz.
\een
Its computation can be reduced to $dz_0$ times
\ben
&&  (\Res_{z\to 0} - \Res_{z\to \infty})
\frac{(\alpha z^2-\beta) (z^2-1)^2}{2 (\alpha-\beta)^2z(z_0^2-z^2)}
 \cdot  \frac{2(z^2+z_i^2)}{(z^2-z_i^2)^2}  z^{2n} dz,
\een
the result is clear a Laurent series in $z_0$ with only terms of even degrees.
Let us present the result for $n \geq 0$,
the result for $n \geq 0$ is similar.
When $n = 0$,
\ben
&&  (\Res_{z\to 0} - \Res_{z\to \infty})
\frac{(\alpha z^2-\beta) (z^2-1)^2}{2 (\alpha-\beta)^2z(z_0^2-z^2)}
\cdot  \frac{2(z^2+z_i^2)}{(z^2-z_i^2)^2}    dz \\
& = & -\frac{\beta}{(\alpha-\beta)^2z_0^2z_i^2}
+ \frac{(3z_i^2+z_0^2-2)\alpha-\beta}{(\alpha-\beta)^2}.
\een
For $n > 0$,
the residue at $z=0$ vanishes, and so
\ben
&&  (\Res_{z\to 0} - \Res_{z\to \infty})
\frac{(\alpha z^2-\beta) (z^2-1)^2}{2 (\alpha-\beta)^2z(z_0^2-z^2)}
\cdot  \frac{2(z^2+z_i^2)}{(z^2-z_i^2)^2}  z^{2n}  dz \\
& = & -\Res_{w\to 0}
\frac{(\alpha-\beta w^2) (1-w^2)^2}{(\alpha-\beta)^2(z_0^2w^2-1)}
\cdot  \frac{(1+z_i^2w^2)}{(1-z_i^2w^2)^2} w^{-2n-3} dw \\
& = & - \frac{(\alpha-\beta w^2) (1-w^2)^2}{(\alpha-\beta)^2(z_0^2w^2-1)}
\cdot  \frac{(1+z_i^2w^2)}{(1-z_i^2w^2)^2} \biggr|_{w^{2n+2}}.
 \een

On the other hand,
by \eqref{def:Omega},
\ben
&& W_{g,n+1}(y_0, \dots, y_n) \\
& = & G_{g,n+1}(y_0, y_1, \dots, y_n) dx_0 \cdots dx_n  \\
& = & \sum_{j=1}^n \cD_{y_0,y_j} G_{g,n}(y_1, \dots, y_n) \cdot dx_0 \cdots dx_n \\
& + & \cE_{y_0, y,y'} G_{g-1,n+2}(y,y', y_1, \dots, y_n)\cdot  dx_0 \cdots dx_n \\
& + & \sum_{\substack{g_1+g_2=g\\I_1 \coprod I_2=[n]}}\;' \cE_{y_0, y,y'}
\biggl( G_{g_1,|I_1|+1}(y, y_{I_1}) \cdot G_{g_2, |I_2|+1}(y', y_{I_2}) \biggr) \cdot  dx_0 \cdots dx_n .
\een
By comparing these two recursions,
we see that it is easy to show that
\ben
&& (\Res_{z\to 0} + \Res_{z\to \infty}) K(z_0, z)
\cdot  \biggl[ \omega_{g-1, n+2}(z, -z, z_{[n]})  \biggr] \\
& = & \cE_{y_0, y,y'} G_{g-1}(y,y', y_1, \dots, y_n) \cdot dx_0,
\een
and when $(h, |I|+1) \neq (0,2)$ and $(g-h, n-|I|+1) \neq (0,2)$,
\ben
&& (\Res_{z\to 0} + \Res_{z\to \infty}) K(z_0, z)  \biggl[
\omega_{h,|I|+1}(z, z_I) \omega_{g-h, n-|I|+1}(-z, z_{[n]-I})  \biggr] \\
& = & \cE_{y_0, y,y'} \biggl[
G_{h,|I|+1}(y, y_I) G_{g-h, n-|I|+1}(y', y_{[n]-I}) \biggr] \cdot dx_0,
\een
and so it suffices to show that
\ben
&& (\Res_{z\to 0} + \Res_{z\to \infty}) K(z_0, z)  \biggl[
\omega_{0,2}(z, z_i) \cdot \omega_{g, n}(-z, z_1, \dots, \hat{z}_i, \dots,z_n) \biggr] \\
&+ &  (\Res_{z\to 0} + \Res_{z\to \infty}) K(z_0, z)  \biggl[
\omega_{0,2}(-z, z_i) \cdot \omega_{g, n}(z, z_1, \dots, \hat{z}_i, \dots,z_n)  \biggr] \\
& = & \biggl(\cD_{y_0, y_j} \biggl[ G_{g, n}(y_1, \dots, y_n) \biggr] \\
&+ &\cE_{y_0, y,y'} \biggl[G_{0,2}(y,y_i)G_{g,n}(y', y_1, \dots, \hat{y}_i, \dots,y_n) \\
&& \quad \quad + G_{g,n}(y, y_1, \dots, \hat{y}_i, \dots,y_n) G_{0,2}(y',y_i)\biggr] \biggr)\cdot dx_0dx_1\cdots dx_n.
\een
For the computation of the left-hand side,
 we have reduced to  the computation with
$\omega_{g,n}(z, z_1, \dots, \hat{z}_i, \dots, z_n)$ replaced by $z^{2n}dz$,
because
\ben
&& \omega_{g,n}(z, z_1, \dots, \hat{z}_i, \dots, z_n) \\
& = & G_{g,n}(y, y_1, \dots, \hat{y}_i, \dots,y_n) dx dx_1 \cdots \widehat{dx_i} \cdots d x_n,
\een
this corresponds to replacing $G_{g,n}(y,y_1, \dots, \hat{y}_i, \dots, y_N)$
by $f_n(y)$ on the right-hand side, where $f_n(y)$ is chosen such that:
\be
z^{2n} dz = f_n(y) dx,
\ee
i.e.,
\be
f_n(y) = \frac{z^{2n}}{\frac{dx}{dz}} =  - \frac{(z^2-1)^2z^{2n-1}}{2s(\alpha-\beta)},
\ee
so we need to check that:
\ben
&& (\Res_{z\to 0} +\Res_{z\to \infty})
\frac{(\alpha z^2-\beta) (z^2-1)^2}{2 (\alpha-\beta)^2z(z_0^2-z^2)}
 \cdot  \frac{2(z^2+z_i^2)}{(z^2-z_i^2)^2}  z^{2n} dz \\
& = & \biggl(\cD_{y_0, y_i} \biggl[f_n(y_i) \biggr] \\
& + & \cE_{y_0, y,y'} \biggl[G_{0,2}(y,y_i)f_n(y')
 + f_n(y)  \cdot G_{0,2}(y',y_i)\biggr] \biggr) \cdot \frac{dx_0}{dz_0} \frac{d x_i}{dz_i}.
\een
The right-hand can be computed as follows.
Note
\ben
\biggl(G_{0,2}(y_0,y_i) + \frac{1}{(x_0-x_i)^2}\biggr)
\frac{dx_0}{dz_0} \frac{dx_i}{dz_i} = \frac{1}{(z_0-z_i)^2}
\een
and so
\ben
&& \cE_{y_0, y,y'} \biggl[G_{0,2}(y,y_i)f_n(y')
 + f_n(y)   \cdot G_{0,2}(y',y_i)\biggr] \biggr)
 \cdot \frac{dx_0}{dz_0} \frac{d x_i}{dz_i} \\
& = & -\frac{1}{2y_0} \biggl[G_{0,2}(y_0,y_i)f_n(y_0)
 + f_n(y_0)  \cdot G_{0,2}(y_0,y_i)\biggr]  \cdot \frac{dx_0}{dz_0} \frac{d x_i}{dz_i} \\
 & = & -\frac{1}{2y_0} \cdot 2 \biggl(\frac{1}{(z_0-z_i)^2}
\cdot\frac{1}{\frac{\pd x_0}{\pd z_0}\frac{\pd x_j}{\pd z_j}}
- \frac{1}{(x_0-x_i)^2}\biggr)
 \cdot f_n(y_0) \cdot \frac{dx_0}{dz_0} \frac{d x_i}{dz_i} \\
& = & -\frac{1}{2y_0} \cdot 2 \biggl(\frac{1}{(z_0-z_i)^2}
\cdot\frac{1}{\frac{\pd x_0}{\pd z_0}\frac{\pd x_j}{\pd z_j}}
- \frac{1}{(x_0-x_i)^2}\biggr)
 \cdot \frac{z_0^{2n}}{\frac{d x_0}{dz_0}}
\cdot \frac{dx_0}{dz_0}\frac{d x_i}{dz_i} \\
& = & -\frac{1}{2y_0} \cdot
\biggl( \frac{1}{(z_0-z_i)^2} -
\frac{\frac{\pd x_0}{\pd z_0}\frac{\pd x_j}{\pd z_j}}{(x-x_0)^2}
\biggr)\frac{(z_0^2-1)^2}{s(\alpha-\beta)} z_0^{2n-1} \\
& = & \frac{1}{2y_0}
\cdot \frac{1}{(z_0+z_i)^2} \cdot \frac{(z_0^2-1)^2}{s(\alpha-\beta)} z_0^{2n-1} \\
& = & \frac{(z_0^2-1)^2(\alpha z_0^2-\beta)}
{(\alpha-\beta)^2(z_0+z_i)^2}z_0^{2n-2},
\een

on the other hand,
\ben
&& \cD_{y_0, y_i} \biggl[f_n(y_i)  \biggr]  \cdot \frac{dx_0}{dz_0} \frac{d x_i}{dz_i}  \\
& = & -\frac{1}{2y_0} \cdot \frac{\pd}{\pd x_i} \biggl( \frac{x_0f_n(y_0) - x_if_n(y_i)}{x_0(x_0-x_i)}\biggr)
\cdot \frac{dx_0}{dz_0}  \frac{d x_i}{dz_i} \\
& = & - \frac{1}{2y_0} \cdot \frac{\pd}{\pd z_i} \biggl( \frac{x_0f_n(y_0) - x_if_n(y_i)}{x_0(x_0-x_i)}\biggr)
\cdot \frac{dx_0}{dz_0} \\
& = & \frac{s(\alpha z_0^2-\beta)}{(\alpha-\beta)z_0}
\cdot \frac{\pd}{\pd z_i} \biggl(
\frac{-\frac{(\alpha z_0^2-\beta)(z_0^2-1)z_0^{2n-1}}
{2(\alpha-\beta)}
+\frac{(\alpha z_i^2-\beta)(z_i^2-1)z_i^{2n-1}}
{2(\alpha-\beta)}}
{\frac{s(\alpha z_0^2-\beta)}{z_0^2-1}
(\frac{s(\alpha z_0^2-\beta)}{(z_0^2-1)}
-\frac{s(\alpha z_i^2-\beta)}{(z_i^2-1)}) } \biggr) \\
&& \cdot \frac{-2sz_0(\alpha-\beta)}{(z_0^2-1)^2} \\
& = & \frac{1}{(\alpha-\beta)^2}
\cdot \frac{\pd}{\pd z_i} \biggl(
(z_i^2-1) \cdot \frac{ (\alpha z_0^2-\beta)(z_0^2-1)z_0^{2n-1}
-(\alpha z_i^2-\beta)(z_i^2-1)z_i^{2n-1} }
{z_0^2-z_i^2} \biggr) \\
\een
For $n > 0$,
\ben
&&  \frac{(\alpha-\beta w^2) (1-w^2)^2}{(1-z_0^2w^2)}
\cdot  \frac{(1+z_i^2w^2)}{(1-z_i^2w^2)^2} \biggr|_{w^{2n+2}} \\
& = & \frac{(z_0^2-1)^2(\alpha z_0^2-\beta)}
{(z_0+z_i)^2}z_0^{2n-2} \\
& + & \frac{\pd}{\pd z_i} \biggl(
(z_i^2-1) \cdot \frac{ (\alpha z_0^2-\beta)(z_0^2-1)z_0^{2n-1}
-(\alpha z_i^2-\beta)(z_i^2-1)z_i^{2n-1} }
{z_0^2-z_i^2} \biggr).
\een
This can be proved as follows.
The generating function of the left-hand side is
\ben
&&  \sum_{n=1}^\infty u^{2n+2} \cdot \frac{(\alpha-\beta w^2) (1-w^2)^2}{(1-z_0^2w^2)}
\cdot  \frac{(1+z_i^2w^2)}{(1-z_i^2w^2)^2} \biggr|_{w^{2n+2}} \\
& = &  \frac{(\alpha-\beta u^2) (1-u^2)^2}{(1-z_0^2u^2)}
\cdot  \frac{(1+z_i^2u^2)}{(1-z_i^2u^2)^2}
- \alpha
- (3\alpha z_i^2-2\alpha-\beta+\alpha z_0^2) u^2,
\een
and the generating function of the right-hand side is:
\ben
&& \sum_{n=1}^\infty u^{2n+2} \biggl( \frac{(z_0^2-1)^2(\alpha z_0^2-\beta)}
{(z_0+z_i)^2}z_0^{2n-2} \\
& + & \frac{\pd}{\pd z_i} \biggl(
(z_i^2-1) \cdot \frac{ (\alpha z_0^2-\beta)(z_0^2-1)z_0^{2n-1}
-(\alpha z_i^2-\beta)(z_i^2-1)z_i^{2n-1} }
{z_0^2-z_i^2} \biggr) \biggr) \\
& = & \frac{(z_0^2-1)^2(\alpha z_0^2-\beta)}
{(z_0+z_i)^2} \cdot \frac{u^4}{1-z_0^{2}u^2} \\
& + & \frac{\pd}{\pd z_i} \biggl(
(z_i^2-1) \cdot \frac{ (\alpha z_0^2-\beta)(z_0^2-1)
\frac{z_0u^4}{1-z_0^2u^2}
-(\alpha z_i^2-\beta)(z_i^2-1) \frac{z_i u^4}{1-z_i^2u^2} }
{z_0^2-z_i^2} \biggr).
\een
It is straightforward to verify that these generating functions coincide with each other.
This completes the proof.
\end{proof}

\section{Connection to Intersection Numbers on $\Mbar_{g,,n}$}
\label{sec:Intersection}

Eynard \cite{Eynard-Int1, Eynard-Int2} 
related the $n$-point functions constructed from EO topological recursions
to intersection theory.
In this Section we carry out some computations  necessary 
for applying his results to relating Grothendieck's  dessins to intersection numbers
on moduli spaces.

\subsection{Local Airy coordinates}

Recall the dessin spectral curve is given by \eqref{eqn:Spectral-Des} which we recall here:
\ben
&& y^2 = \frac{1}{4s^2} \biggl(1-\frac{2(u+v)s}{x}+ \frac{(u-v)^2s^2}{x^2} \biggr).
\een
This defines a double covering of the complex plane ramified at two points 
\begin{align}
 x_+ & =   s\beta =s(\sqrt{u}+\sqrt{v})^2, & x_- & = s\alpha =s (\sqrt{u}-\sqrt{v})^2,
\end{align}
where
\be
\alpha = (\sqrt{u} - \sqrt{v})^2,
\;\; \beta = (\sqrt{u} + \sqrt{v})^2.
\ee 
Following \cite{Kaz-Zog},
one can  globally parameterize the dessin spectral curve by:
\be
z = \sqrt{\frac{1-s\beta /x}{1-s\alpha/x}}.
\ee
In this coordinate,
the Bergman kernel is given by \eqref{eqn:Omega-0-2} which we recall here:
\ben
W_{0,2}(p_1,p_2)
= \frac{dz_1dz_2}{(z_1-z_2)^2}.
\een

Following \cite{Eynard-Int1, Eynard-Int2},
the local Airy coordinates near $x_\pm$ are defined by:
\be
\xi_\pm = (x-x_\pm)^{1/2}.
\ee

\subsection{Computations of the times}

Plugging $x = x_\pm + \xi_\pm^2$ into the equation of the dessin spectral curve,
one gets:
\be
y^2 = \pm \frac{\xi_\pm^2 (4 \sqrt{uv} s \pm \xi_\pm^2)}{(s(\sqrt{u}\pm \sqrt{v})^2 + \xi_\pm^2)^2}.
\ee
So we get:
\be
y = \xi_\pm \cdot \frac{2(\sqrt{u})^{1/2}(\pm \sqrt{v})^{1/2}s^{1/2}}{s(\sqrt{u}\pm\sqrt{v})} \cdot
\frac{(1\pm \frac{\xi_\pm^2}{4 \sqrt{uv} s})^{1/2}}{1+ \frac{\xi^2_\pm}{s(\sqrt{u}\pm \sqrt{v})^2}}
\cdot 
\ee
It is straightforward to write down the series expansions in $\xi_\pm$ using
\ben
&& (1+4x)^{1/2} = 1 + 2 \sum_{m=0}^\infty \frac{(-1)^m}{m+1}\binom{2m}{m} x^{m+1}, \\
&& \frac{1}{1+x} = \sum_{n=0}^\infty (-1)^n x^n.
\een

\subsection{The expansions of the Bergman kernel in local Airy coordinates}

Near $p_1=p_2=p_+=(x_+, 0)$, 
we have
\ben
z_j = \sqrt{\frac{1-s(a+b)^2 /(s(a+b)^2+\xi_{j,+}^2)}{1-s(a-b)^2/(s(a+b)^2+\xi_{j,+}^2)}}
= \frac{\xi_{j, +}}{(4sab+ \xi_{j,+}^2)^{1/2}},
\een
where $a= \sqrt{u}$, $b=\sqrt{v}$,
and so 
\ben
B(z_1, z_2) & = & \frac{dz_1dz_2}{(z_1-z_2)^2} \\
& = & \frac{d\xi_{1,+}}{(1+\frac{\xi_{1,+}^2}{4abs})^{1/2}} 
\cdot \frac{d\xi_{2,+}}{(1+\frac{\xi_{2,+}^2}{4abs})^{1/2}} \\
&& \cdot \frac{1}{(\xi_{1,+}(1+\frac{\xi_{2,+}^2}{4abs})^{1/2}-\xi_{2,+}(1+\frac{\xi_{1,+}^2}{4abs})^{1/2})^2}.
\een 
Its expansion can be computed using the combinatorial identity:
\ben
&& \frac{1}{(1+4t x^2)^{1/2}(1+ 4t y^2)^{1/2}(x(1+ 4t y^2)^{1/2}-y(1+ 4t x^2)^{1/2})^2} \\
& = & \frac{1}{(x-y)^2}
+ \sum_{n\geq 0} (n+2) (-t)^{n+1} \sum_{k=0}^n 2 \frac{\binom{n}{k}^2 \binom{2n+2}{n}}{\binom{2n+2}{2k+1}} x^{2k}y^{2n-2k}.
\een
This is found and proved as follows.
With the help of Maple we get:
\ben
&& \frac{1}{(1+4t x^2)^{1/2}(1+ 4t y^2)^{1/2}(x(1+ 4t y^2)^{1/2}-y(1+ 4t x^2)^{1/2})^2} \\
& = & \frac{1}{(x-y)^2} -2t + 3(2x^2+2y^2)t^2
- 4(5x^4+6x^2y^2+5y^4)t^3 + \cdots,
\een
After consulting with \cite{OEIS}
we find the numbers $1, 2,2,5,6,5,\dots$ are the integer sequence 
A120406 on \cite{OEIS},
they are given by the closed formula:
\be
T(n,k) = 2 \frac{\binom{n}{k}^2 \binom{2n+2}{n}}{\binom{2n+2}{2k+1}},
\ee
and have the following generating series: 
\be
\begin{split}
& 1-\sqrt{(1-ax)(1-bx)} \\ 
= & \frac{(a+b)x}{2} 
 + \frac{(b-a)^2x^2}{8}\sum_{n \geq 0}( \sum_{0\leq k \leq n}T(n,k)a^kb^{n-k} )(\frac{x}{4})^n.
\end{split}
\ee
From this we get:
\ben 
&& 1-\sqrt{(1+4tx^2)(1+4ty^2)} \\
&= & - 2(x^2+y^2)t
 + 2(x^2-y^2)^2 \sum_{n \geq 0}(-1)^nt^{n+2} \sum_{0\leq k \leq n}T(n,k)x^{2k}y^{2n-2k} .
\een
After taking $\frac{\pd}{\pd t}$ on both sides we get:
\ben
&& \sum_{n\geq 0} (n+2) (-t)^{n+1} \sum_{k=0}^n 2 \frac{\binom{n}{k}^2 \binom{2n+2}{n}}{\binom{2n+2}{2k+1}} x^{2k}y^{2n-2k} \\
& = & \frac{x^2+y^2 + 8tx^2y^2 - (x^2+y^2)(1+4tx^2)^{1/2}(1+4ty^2)^{1/2}}{(x^2-y^2)^2
(1+4tx^2)^{1/2}(1+4ty^2)^{1/2}}.
\een
One can then proceeds as follows:
\ben
&& \frac{1}{(1+4t x^2)^{1/2}(1+ 4t y^2)^{1/2}(x(1+ 4t y^2)^{1/2}-y(1+ 4t x^2)^{1/2})^2} \\
& = & \frac{(x(1+ 4t y^2)^{1/2}+y(1+ 4t x^2)^{1/2})^2}{(1+4t x^2)^{1/2}(1+ 4t y^2)^{1/2}(x^2-y^2)^2} \\
& = & \frac{x^2+y^2 + 8tx^2y^2 + 2xy(1+4tx^2)^{1/2}(1+4ty^2)^{1/2}}{(x^2-y^2)^2
(1+4tx^2)^{1/2}(1+4ty^2)^{1/2}} \\
& = & \frac{1}{(x-y)^2}
+ \sum_{n\geq 0} (n+2) (-t)^{n+1} \sum_{k=0}^n 2 \frac{\binom{n}{k}^2 \binom{2n+2}{n}}{\binom{2n+2}{2k+1}} x^{2k}y^{2n-2k}.
\een

Near $p_1=p_2=p_-=(x_-, 0)$,
we have
\ben
z_j = \sqrt{\frac{1-s(a+b)^2 /(s(a-b)^2+\xi_{j,-}^2)}{1-s(a-b)^2/(s(a-b)^2+\xi_{j,-}^2)}}
= \frac{(-4sab+ \xi_{j,-}^2)^{1/2}}{\xi_{j, -}}
\een 
and so
\ben
B(z_1, z_2)  
& = & \frac{d\xi_{1,-}}{(1+\frac{\xi_{1,-}^2}{4abs})^{1/2}}
\cdot \frac{d\xi_{2,-}}{(1+\frac{\xi_{2,-}^2}{4abs})^{1/2}} \\
&& \cdot \frac{1}{(\xi_{1,-}(1+\frac{\xi_{2,-}^2}{4abs})^{1/2}-\xi_{2,-}(1+\frac{\xi_{1,-}^2}{4abs})^{1/2})^2}.
\een
Its expansion can be found in the same fashion.

Near Near $p_1=p_+=(x_+,0)$ and $p_2=p_-=(x_-, 0)$,
we have
\begin{align*}
z_1 & = \frac{\xi_{1, +}}{(4sab+ \xi_{1,+}^2)^{1/2}}, & 
z_2 & = \frac{(-4sab+ \xi_{2,-}^2)^{1/2}}{\xi_{2, -}},
\end{align*}
In this case 
\ben
B(z_1, z_2) & = & \frac{\sqrt{-1}d \xi_{1,+} d\xi_{2,-}}
{4abs (1+\frac{\xi_{1,+}^2}{4abs})^{1/2}(1-\frac{\xi_{2,-}^2}{4abs})^{1/2} \biggl(
(1+\frac{\xi_{1,+}^2}{4abs})^{1/2}(1-\frac{\xi_{2,-}^2}{4abs})^{1/2} 
- \frac{\xi_{1,+}\xi_{2,-}}{\sqrt{-1}4abs}\biggr)^2}.
\een
It is possible to find explicit expressions for its expansion by noting:
\ben
&& \frac{1}{(1+4tx^2)^{1/2}(1+4ty^2)^{1/2}((1+4tx^2)^{1/2}(1+4ty^2)^{1/2} + 4txy)^2} \\
& = & \frac{((1+4tx^2)^{1/2}(1+4ty^2)^{1/2}- 4txy)^2}{(1+4tx^2)^{1/2}(1+4ty^2)^{1/2}(1+4tx^2+4ty^2)^2} \\
& = & \frac{(1+4tx^2)^{1/2}(1+4ty^2)^{1/2}}{(1+4tx^2+4ty^2)^2}  \\
& - & \frac{8txy^2}{(1+4tx^2+4ty^2)^2}  \\
& + & \frac{16t^2x^2y^2}{(1+4tx^2)^{1/2}(1+4ty^2)^{1/2}(1+4tx^2+4ty^2)^2}. 
\een
The Taylor expansion of the right-hand side of the second equality can be found by standard method.

\section{Concluding Remarks}
\label{sec:Conclusion}

In this paper we have discussed the emergent geometry of the enumeration 
of Grothendieck's dessins d'enfant.
This leads to Eynard-Orantin topological recursion on the spectral curve
\be
y^2 = \frac{1}{4s^2} \biggl(1-\frac{2(u+v)s}{x}+ \frac{(u-v)^2s^2}{x^2} \biggr).
\ee 
This curve can be found by computing the genus zero one-point function of the dessin tau-function.
Furthermore,
by computing the genus zero two-point function of the dessin tau-function,
the Bergman kernel is shown to be
\be
B(z_1, z_2) = \frac{dz_1dz_2}{(z_1-z_2)^2},
\ee 
where 
\be
z_j = \sqrt{\frac{x_j-s(\sqrt{u}+\sqrt{v})^2}{x_j- s(\sqrt{u}-\sqrt{v})^2}}.
\ee

We have compared with the genus zero one-point functions of some related theories.
It turns out that in the cases of Witten-Kontsevich tau-function (with nonzero $t_0$),
partition function of  Hermitian one-matrix models 
and modified partition function  of Hermitian one-matrix models with even couplings
(after the use of t'Hooft coupling constant),
the genus zero one-point functions are all related to Catalan numbers,
but for the dessin partition function,
it is related to Narayana numbers of type A:
\be
N(A_n, q) = \sum_{k=0}^n \frac{1}{n+1} \binom{n+1}{k} \binom{n+1}{k+1} q^k.
\ee
More precisely,
\be
\begin{split}
G_{0,1}(x) = &  \sum_{n=1}^\infty \frac{s^n}{x^{n+1}} 
\sum_{k=1}^n \frac{1}{n} \binom{n}{k}\binom{n}{k-1} u^{n+1-k}v^k \\
= & \frac{1}{2s}
\biggl(1-\frac{s(u+v)}{x}
- \sqrt{1-\frac{2s(u+v)}{x}+\frac{s^2(u-v)^2}{x^2}}\biggr).
\end{split}
\ee
The generalized Narayana numbers of type B/C are given by:
\be
N(B_n,q) = \sum_{k=0}^n \binom{n}{k}^2 q^k,
\ee
and the generaized Narayana numbers of type D are given by:
\be
N(D_n,q) = 1+q^n  +\sum_{k=1}^{n-1} 
\biggl[\binom{n}{k}^2 - \frac{n}{n-1} \binom{n-1}{k-1} \binom{n-1}{k} \biggr] q^k.
\ee

From the generating series 
\be
\sum_{n=0}^\infty x^n \sum_{k=0}^n \binom{n}{k}^2 y^k
= \frac{1}{\sqrt{1-2x-2xy+x^2-2x^2y+x^2y^2}}
\ee
we get:
\ben
\sum_{n=0}^\infty \frac{s^n}{x^{n+1}} \sum_{k=0}^n \binom{n}{k}^2 u^{n-k} v^k 
=\frac{1}{\sqrt{x^2-2s(u+v)x+s^2(u-v)^2}},
\een 
and so we conjecture that
\be
y^2 = \frac{1}{x^2-2s(u+v)x+s^2(u-v)^2}
\ee
is related to the spectral curve for enumeration of ``Grothendieck's dessins of Type B/C".
Similarly,
for Type D,
\ben
&& \sum_{n=0}^\infty  \frac{s^n}{x^{n+1}} \biggl(
u^n+v^n  +\sum_{k=1}^{n-1}
\biggl[\binom{n}{k}^2 - \frac{n}{n-1} \binom{n-1}{k-1} \binom{n-1}{k} \biggr] u^{n-k}v^k \biggr) \\
& = & \sum_{n=0}^\infty \frac{s^n}{x^{n+1}} \sum_{k=0}^n \binom{n}{k}^2 u^{n-k} v^k  \\
& - & \sum_{n=1}^\infty  \frac{(n+1)s^{n+1}}{x^{n+2}}  
\sum_{k=1}^{n} \frac{1}{n} \binom{n}{k-1} \binom{n}{k}  u^{n+1-k}v^k \\
& = & \frac{1}{\sqrt{x^2-2s(u+v)x+s^2(u-v)^2}} \\
& + & s \frac{\pd}{\pd x} \sum_{n=1}^\infty \frac{s^n}{x^{n+1}}
\sum_{k=1}^n \frac{1}{n} \binom{n}{k}\binom{n}{k-1} u^{n+1-k}v^k \\
& = & \frac{s(u+v)}{2x^2}
+ \frac{2-\frac{s(u+v)}{x}+\frac{s^2(u-v)^2}{x^2}}{2x \sqrt{1-\frac{2s(u+v)}{x}+\frac{s^2(u-v)^2}{x^2} }}.
\een
We conjecture that it
is related to the spectral curve for enumeration of ``Grothendieck's dessins of Type D".
One can also take $u=v$ in the above discussions.

The Narayana numbers and Catalan numbers have a wide connection to a plethora of mathematical objects,
including combinatorics of Coxeter groups, generalized noncrossing partitions,
free probability, cluster algebras, etc. (See \cite{Fom-Rea} and \cite{Arm} for expositions.)  
We believe the surprising appearance of these numbers in the emergent geometry of Grothendieck's dessins
and clean dessins suggest deeper connections to these objects that deserve further investigations.

\vspace{.2in}
{\bf Acknowledgements}.
The author is partly supported by NSFC grants 11661131005 and 11890662.
The numerical computations in this paper are carried out with the help of Maple and \cite{OEIS}.

 \bibliographystyle{plain}

\end{document}